\documentclass[lettersize,journal]{IEEEtran}
\usepackage[utf8]{inputenc}
\usepackage[T1]{fontenc}
\usepackage{hyperref}
\usepackage{ifthen}
\usepackage{cite}
\usepackage{xcolor}
\usepackage{amsthm,amsfonts,amssymb,arydshln}
\usepackage{mathrsfs}
\usepackage{graphicx}
\usepackage{caption}
\usepackage{subcaption}
\usepackage[cmex10]{amsmath}
\usepackage{enumitem}
\usepackage{stfloats}
\usepackage{tikz}
\usetikzlibrary{decorations.pathreplacing,decorations.markings,decorations.pathmorphing,calc,arrows.meta}

\usetikzlibrary{arrows.meta,positioning,calc}

\usepackage{amsmath, amssymb, bm}
\usepackage{mathtools}

\newcommand{\ve}[1]{\bm{#1}}      % vector
\newcommand{\m}[1]{\mathbf{#1}}	 % matrix
\newcommand{\E}[1]{\mathbb{E}\left[ #1 \right]}
\newcommand{\tr}[1]{\text{tr}\left( #1 \right)}
\newcommand{\rank}[1]{\text{rank}\left( #1 \right)}

\theoremstyle{definition}
\newtheorem{theorem}{Theorem}

\newtheorem{lemma}{Lemma}
\newtheorem{corollary}{Corollary}
\newtheorem{definition}{Definition}

\definecolor{softblack}{HTML}{222222}
\definecolor{softgray}{HTML}{333333}
\definecolor{deepblue}{HTML}{1F4E79}
\definecolor{deepgreen}{HTML}{1B5E20}
\definecolor{deeprose}{HTML}{7F1D1D}
\allowdisplaybreaks

\begin{document}

\title{Degrees of Freedom of Over-the-Air Computation over a MIMO Gaussian Network with Two Transmitters and Two Receivers}

\author{Yong~Dong, Hua~Sun,~\IEEEmembership{Senior Member,~IEEE,} and Syed~A.~Jafar,~\IEEEmembership{Fellow, IEEE.}

\thanks{
Yong Dong and Syed A. Jafar are with the Department of Electrical Engineering and Computer Science, University of California at Irvine, Irvine, CA 92697 USA (e-mail:yongd3@uci.edu; syed@uci.edu). Hua Sun is with the Department of Electrical Engineering, University of North Texas, Denton, TX 76203 USA (e-mail: hua.sun@unt.edu).}}

\maketitle

\begin{abstract}
The fundamental limits of over-the-air computation (AirComp) are explored in a two-transmitter, two-receiver MIMO Gaussian network, where both receivers demand the same aggregation of source symbols originating at the two transmitters. An AirComp degrees of freedom (ACDoF) metric is defined, constrained by an asymptotic mean-squared error threshold. For a generic MIMO setting where the two transmitters are equipped with $M_1, M_2$ antennas, and the two receivers with $N_1, N_2$ antennas, the AirComp DoF value is shown to be almost surely equal to $\min\{M_1,M_2,N_1,N_2,(1/3)\max\{M_1+M_2,N_1+N_2\}\}$. For SISO settings results are extended beyond generic channels to arbitrary channel realizations. For finite signal-to-noise ratio(SNR) settings, an iterative alternating optimization  algorithm is explored.
\end{abstract}

\begin{IEEEkeywords}
MIMO, Over-the-air computation, degrees of freedom, multiple access,  mean-squared error.
\end{IEEEkeywords}

\section{Introduction}

Over-the-air computation (AirComp) has emerged as a method that exploits the superposition property of multiple access channels (MACs)—specifically, signal interference—to perform computations directly during transmission. This is particularly suitable when the receiver's desired function is a nomographic function of the users' signals \cite{goldenbaumNomographicFunctionsEfficient2015}, for example sum-computation. The study of AirComp can generally be classified into two main types. The first, \emph{uncoded} AirComp, involves directly transmitting and estimating analog signals. In this approach, information is modulated onto the amplitude of the carrier signal waveform, enabling high-precision data transmission within a continuous range constrained by the transmit power. For uncoded AirComp, performance is usually evaluated based on computational error, such as mean-squared error (MSE)  \cite{zhuMIMOOvertheAirComputation2018, chenUniformForcingTransceiverDesign2018, zangOvertheAirComputationSystems2020, liuOvertheAirComputationSystems2020a, caoOptimizedPowerControl2020a}. The second type, \emph{coded} AirComp, involves quantizing and encoding the signals before transmission \cite{nazerComputationMultipleAccessChannels2007, nazerComputeandForwardHarnessingInterference2011, goldenbaumNomographicFunctionsEfficient2015, chenCommunicatingComputingMAC2019, jeonOpportunisticFunctionComputation2016,wangGenieChainsExploring2016}. Research on coded AirComp primarily focuses on transmission rate optimization. In this work, we combine aspects of both uncoded and coded AirComp. We  evaluate transmission efficiency in terms of computation rate via a degrees of freedom  metric, subject to a desired accuracy level captured by an MSE constraint. 

\begin{figure}[!t]
    \centering
    
    \tikzset{
        node distance=0.5cm and 0.5cm,
        font=\scriptsize,
        block/.style={
            draw, rectangle, 
            minimum width=0.75cm, minimum height=0.45cm, 
            inner sep=2pt, align=center, fill=white
        },
        antennas/.style={
            append after command={
                \pgfextra{
                    \foreach \x in {-0.2, 0, 0.2} {
                        \draw ([xshift=\x cm]\tikzlastnode.north) -- ++(0,0.15) 
                              -- ++(-0.05,0.05);
                        \draw ([xshift=\x cm]\tikzlastnode.north) ++(0,0.15) 
                              -- ++(0.05,0.05);
                    }
                }
            }
        },
        sig/.style={-Stealth, thick},
        int/.style={-Stealth, thick, dashed, red!80!black},
        cell/.style={draw, circle, minimum size=3cm, thick}
    }

    \begin{subfigure}[t]{0.6\columnwidth}
        \centering
        \resizebox{\linewidth}{!}{%
        \begin{tikzpicture}
            \coordinate (c1) at (0,0);
            \coordinate (c2) at (2.8,0);
        
            \node[cell] at (c1) {};
            \node[cell] at (c2) {};
        
            \node[block, antennas] (rx1) at (0, 0.7) {Rx-$1$};
            \node[block, antennas] (tx1) at (-0.6, -0.7) {Tx-$1$};
            \node[block, antennas] (tx2) at (0.6, -0.7) {Tx-$2$};
        
            \node[block, antennas] (rx2) at (2.8, 0.7) {Rx-$2$};
            \node[block, antennas] (tx3) at (2.2, -0.7) {Tx-3};
            \node[block, antennas] (tx4) at (3.4, -0.7) {Tx-4};
        
            \draw[sig] (tx1) -- (rx1);
            \draw[sig] (tx2) -- (rx1);
            \draw[sig] (tx3) -- (rx2);
            \draw[sig] (tx4) -- (rx2);
        
            \draw[int] (tx1) -- (rx2.west);
            \draw[int] (tx2) -- (rx2.south west);
            \draw[int] (tx3) -- (rx1.south east);
            \draw[int] (tx4) -- (rx1.east);
        \end{tikzpicture}%
        }
        \caption{}
        \label{fig:1_a}
    \end{subfigure}
    \hfill
    \begin{subfigure}[t]{0.3\columnwidth}
        \centering
        \resizebox{\linewidth}{!}{%
        \begin{tikzpicture}
            \coordinate (c3) at (0,0);
            
            \node[cell] at (c3) {};
        
            \node[block, antennas] (rx1b) at (-0.6, 0.6) {Rx-$1$};
            \node[block, antennas] (rx2b) at (0.6, 0.6) {Rx-$2$};
            
            \node[block, antennas] (tx1b) at (-0.6, -0.8) {Tx-$1$};
            \node[block, antennas] (tx2b) at (0.6, -0.8) {Tx-$2$};
        
            \draw[sig] (tx1b) -- (rx1b);
            \draw[sig] (tx1b) -- (rx2b);
            \draw[sig] (tx2b) -- (rx1b);
            \draw[sig] (tx2b) -- (rx2b);
        \end{tikzpicture}%
        }
        \caption{}
        \label{fig:1_b}
    \end{subfigure}
	
    \caption{AirComp models. (a) Simultaneous AirComp at two receivers, each with its own pair of transmitters \cite{caoCooperativeInterferenceManagement2021, liMultiCellOvertheAirComputation2023, lanSimultaneousSignalandInterferenceAlignment2020, liChannelReconfigurationDistributed2024}. (b) Simultaneous AirComp at two receivers from the same two transmitters (this work).}

    \label{fig:aircomp_model}
\end{figure}

Simultaneous AirComp at multiple receivers, each with its own set of desired transmitters (see Figure \ref{fig:1_a}), has been studied in \cite{caoCooperativeInterferenceManagement2021, liMultiCellOvertheAirComputation2023, lanSimultaneousSignalandInterferenceAlignment2020, liChannelReconfigurationDistributed2024}. To manage inter-cell interference, cooperative power control is used to minimize weighted-sum-MSE~\cite{caoCooperativeInterferenceManagement2021} or weighted-sum $\alpha$-fairness~\cite{liMultiCellOvertheAirComputation2023}. The utility of interference alignment has also been explored to optimize antenna resources for one-shot computation of in-cell aggregates in the presence of inter-cell interferers. For example, the two-cell case requires $2N$ antennas to support $N$ simutaneous AirComp tasks per cell~\cite{lanSimultaneousSignalandInterferenceAlignment2020}, with extensions to $M$ cells explored in~\cite{liChannelReconfigurationDistributed2024}.

In contrast, this paper explores the \emph{single-cell, multi-receiver} scenario illustrated in  Figure \ref{fig:1_b}. Specifically, we study a two-transmitter, two-receiver MIMO network and ask: on average, how many sum computations per channel use are achievable under a target MSE threshold? Unlike the multi-cell case—where receivers belong to different cells and thus seek \emph{different} aggregates (each local to its own transmitters)—our single-cell, multi-receiver setting has two receivers within the same cell targeting the \emph{same} aggregate of the same transmitters. Under this common-function objective, the usual notion of “interference” disappears: every transmitter’s signal is useful to both receivers. An analogous problem, referred to as function alignment or computation alignment \cite{suhComputationMulticastNetworks2012,suhComputationMulticastNetworks2016,suhNetworkDecompositionFunction2013, suhInteractiveFunctionComputation2013}, has been considered in the basic Avestimehr-Diggavi-Tse (ADT) \cite{Avestimehr_Diggavi_Tse} deterministic framework. The generalization to Gaussian setting, especially with multiple antennas, has not  been explored.  This setting is the focus of the present work.

\subsection{Notation}
%Scalars are denoted by uppercase or lowercase letters in regular (non-bold) font (e.g., $\nu$, $X$). Vectors are denoted by bold lowercase letters (e.g., $\ve{x}, \ve{y}$), and matrices by bold uppercase letters (e.g., $\m{A}, \m{B}$). The $i$-th element of a vector $\ve{x}$ is denoted by $[\ve{x}]_i$, and the $(i,j)$-th entry of a matrix $\m{A}$ by $[\m{A}]_{i,j}$. 
A sequence $(X(1), X(2), \ldots, X(N))$ is denoted by $X^N$. 
%Sometimes, we put a sequence into a vector and denote it as $X^{(N)}$. Also, we place a sequence on the diagonal of a matrix and denote it by $X^{[N]}$. 
The transpose, conjugate transpose (Hermitian), rank, and trace of a matrix $\m{A}$ are denoted by $\m{A}^T$, $\m{A}^\dagger$, $\rank{\m{A}}$, and $\tr{\m{A}}$, respectively. The Frobenius norm of a matrix $\m{A}$ is denoted by $\lVert \m{A} \rVert_\mathrm{F}$. 
%Calligraphic letters (e.g., $\mathcal{X}, \mathcal{Y}$) denote finite alphabets or input/output spaces of random variables. Blackboard bold letters (e.g., $\mathbb{R}, \mathbb{N}$) denote standard sets such as the real numbers and the natural numbers. 
The set $\{1, 2, \ldots, n\}$ is denoted by $[n]$. For the relations $=$, $\ge$, and $\le$, the notation $a \overset{\mathrm{a.s.}}{=} b$ (and similarly for $\overset{\mathrm{a.s.}}{\ge}$ and $\overset{\mathrm{a.s.}}{\le}$) indicates that the respective relation between $a$ and $b$ holds almost surely, i.e., with probability one. We say that $f(x)=O(g(x))$ as $x\rightarrow 0$, if there exist positive constants $c,\delta$ such that $|f(x)|\leq c|g(x)|$ for  $0<|x|<\delta$. We say $f(y)=O(g(y))$ as $y\rightarrow \infty$, or simply $f(y)=O(g(y))$ if there exist positive constants $c,\delta$ such that $|f(y)|\leq c|g(y)|$ for  $y>\delta$.  Finally, $f(x)=o(g(x))$ as $x\rightarrow \infty$, if for every $c>0$ there exists $\delta(c)>0$ such that $|f(x)|\leq c|g(x)|$ for  $x>\delta(c)$.

\section{Problem Formulation}\label{sec:problem_formulation}

\begin{figure}[!t]
    \centering
    \resizebox{\columnwidth}{!}{%
    \begin{tikzpicture}[
        thick,
        every node/.style = {font=\small},
        block/.style = {draw, rectangle, minimum width=18pt, minimum height=38pt},
        ant/.style   = {circle, draw, inner sep=1pt}
    ]

    %----------------------------------------------------
    % --- TX Rectangle ---
    \node[block] (X1) at (0,  1.2) {};
    \node[above=2pt of X1] {Tx-$1$};

    \node[block] (X2) at (0, -1.2) {};
    \node[above=2pt of X2] {Tx-$2$};

    % --- RX Rectangle ---
    \node[block] (Y1) at (6,  1.2) {};
    \node[above=2pt of Y1] {Rx-$1$};

    \node[block] (Y2) at (6, -1.2) {};
    \node[above=2pt of Y2] {Rx-$2$};

    % --- antennas in rectangles ---
    \foreach \rect in {X1,X2,Y1,Y2}{
        \node[ant] at ($(\rect.center)+(0, 12pt)$) {};
        \node[ant] at ($(\rect.center)+(0, 6pt)$) {};
        \node at  ($(\rect.center)+(0,0pt)$) {$\vdots$};
        \node[ant] at ($(\rect.center)+(0,-12pt)$) {};
    }

    % --- “+” nodes ---
    \node[circle, draw, inner sep=1pt, minimum size=12pt] (s1) at (5,  1.2) {$+$};
    \node[circle, draw, inner sep=1pt, minimum size=12pt] (s2) at (5, -1.2) {$+$};

    % --- noise ---
    \node (z1) at (5,  2.2) {$\ve{z}_1$};
    \node (z2) at (5, -0.2) {$\ve{z}_2$};

    %----------------------------------------------------
    % --- channel arrows ---
    \draw[->, >=Latex] (X1.east) -- (s1.west) node[midway, above] {$\m{H}_{11}$};
    \draw[->, >=Latex] (X2.east) -- (s2.west) node[midway, below] {$\m{H}_{22}$};

    \draw[->, >=Latex] (X1.east) -- (s2.west)
          node[midway, below left, yshift=-8pt] {$\m{H}_{12}$};
    \draw[->, >=Latex] (X2.east) -- (s1.west)
          node[midway, above left, yshift= 8pt] {$\m{H}_{21}$};

    % --- noise arrows ---
    \draw[->, >=Latex] (z1) -- (s1);
    \draw[->, >=Latex] (z2) -- (s2);

    % --- sum to Rx arrow ---
    \draw[->, >=Latex] (s1.east) -- (Y1.west);
    \draw[->, >=Latex] (s2.east) -- (Y2.west);
    
    % messages
    \node (S1) at (-2,  1.2) {$\{s_1(k)\}$};
    \draw[->, >=Latex] (S1.east) -- (X1.west);
    
    \node (S2) at (-2, -1.2) {$\{s_2(k)\}$};
    \draw[->, >=Latex] (S2.east) -- (X2.west);
    
    \node (F1) at (8,  1.2) {$\{\hat{f}_1(k)\}$};
    \draw[->, >=Latex] (Y1.east) -- (F1.west);
    
    \node (F2) at (8, -1.2) {$\{\hat{f}_2(k)\}$};
    \draw[->, >=Latex] (Y2.east) -- (F2.west);
    
    \end{tikzpicture}%
    }
    \caption{Two-transmitter, two-receiver MIMO Gaussian network.}
    \label{fig:net}
\end{figure}

Consider a two-user, memoryless, discrete-time network in the complex domain with additive white Gaussian noise (AWGN), consisting of two transmitters, Tx-$1$ and Tx-$2$, which are equipped with $M_1$ and $M_2$ transmit antennas respectively, and two receivers, Rx-$1$ and Rx-$2$, which are equipped with $N_1$ and $N_2$ receive antennas, respectively, as shown in Figure \ref{fig:net}. Over the $\nu$-th channel use, the received signal $\ve{y}_i(\nu) \in \mathbb{C}^{N_i}$ at Rx-$i$, $i\in\{1,2\}$ is given by
\begin{align}
    \ve{y}_i(\nu) &= \m{H}_{i1}(\nu)\ve{x}_1(\nu) + \m{H}_{i2}(\nu)\ve{x}_2(\nu) + \ve{z}_i(\nu), \label{eq:modely1}
\end{align}
where $\ve{x}_1(\nu) \in \mathbb{C}^{M_1}$ and $\ve{x}_2(\nu) \in \mathbb{C}^{M_2}$, are the transmitted signals from Tx-$1$ and Tx-$2$, respectively. To avoid degenerate scenarios, the channel coefficients are assumed to be bounded away from zero and infinity, i.e., there exists a finite positive $M_o$ such that $1/M_o\leq |\m{H}_{ij}(\nu)|\leq M_o$ for all $i,j\in\{1,2\}, \nu\in\mathbb{N}$.  Tx-$1$ and Tx-$2$ are subject to transmit power constraints, normalized to unity (by a corresponding  scaling of the channel coefficients) without loss of generality. For a coding scheme utilizing $N$ channel uses, the power constraint is expressed as,
\begin{align} \label{eq:P}
  \frac{1}{N}  \sum_{\nu=1}^N \E{\lVert\ve{x}_i(\nu)\rVert^2} \le 1, \; i \in \{1,2\}.
\end{align}
The additive noise vectors $\ve{z}_1(\nu) \in \mathbb{C}^{N_1}$ and $\ve{z}_2(\nu) \in \mathbb{C}^{N_2}$ are modeled as i.i.d. circularly symmetric complex Gaussian with zero mean and variance $\sigma^2$. The signal-to-noise ratio (SNR) is defined as 
\begin{align}
\rho \triangleq 1/\sigma^2.\label{def:rho}
\end{align}
Based on \eqref{def:rho}, the high-SNR limit $\lim_{\rho\to\infty}(\cdot)$ is used interchangeably with $\lim_{\sigma^2\to 0}(\cdot)$. Perfect channel state information is assumed at both transmitters (perfect CSIT) and receivers (perfect CSIR).

The channel from Tx-$j$ to Rx-$i$ at the $\nu$-th channel use is represented by $\m{H}_{ij}(\nu) \in\mathbb{C}^{N_i\times M_j}$, $i,j\in\{1,2\}$.  Define the following block matrices
\begin{align} \label{eq:a_H}
    \m{H}(\nu) &\triangleq \begin{bsmallmatrix}\m{H}_{11}(\nu)&\m{H}_{12}(\nu)\\ \m{H}_{21}(\nu)&\m{H}_{22}(\nu)\end{bsmallmatrix}, 
    \end{align}
for  $\nu \in \mathbb{N}$.  A sequence of random channel matrices is denoted by $\mathcal{H} \triangleq \{\m{H}(\nu)\}_{\nu\in\mathbb{N}}$, whereas  $\mathcal{H}=\mathscr{H}$ denotes a fixed realization.  

Various channel models considered in this work are classified as follows. The  channel is \emph{time-invariant} if $\m{H}(\nu_1)=\m{H}(\nu_2)$ for all $\nu_1,\nu_2\in\mathbb{N}$. The channel is \emph{time-varying} otherwise. $\mathcal{H}$  is \emph{i.i.d. time-varying} if $\m{H}(\nu)$ are independent and identically distributed across $\nu\in\mathbb{N}$. $\mathcal{H}$ is \emph{generic} if the entries of the  matrix $\m{H}(\nu)$ are drawn i.i.d. from a continuous distribution.  

As a compact label, the tuple $(M_1, M_2; N_1, N_2; \mathcal{H})$  denotes the network with two transmitters and two receivers, where $M_i$, $i \in \{1,2\}$ is the number of transmit antennas, and $N_j$, $j \in \{1,2\}$ is the number of receive antennas. Restricted to a particular channel realization $\mathcal{H}=\mathscr{H}$, the network is denoted as $(M_1, M_2; N_1, N_2; \mathscr{H})$, which may be further abbreviated as an  $(M_1, M_2; N_1, N_2)$ network, or $\mathscr{H}$, for compact notation. 

Each Tx-$i$, $i \in \{1, 2\}$, obtains an independent data sequence $\{s_i(\kappa) \in \mathbb{C}\}_{\kappa \in \mathbb{N}}$, comprised of symbols drawn i.i.d. from a complex Gaussian distribution $\mathcal{CN}(0, \sigma^2_s)$. Each Rx-$j$, $j \in \{1, 2\}$, wishes to compute the sum sequence $\{f(\kappa) = s_1(\kappa) + s_2(\kappa)\}_{\kappa \in \mathbb{N}}$.

\begin{definition}[Computation Code] \label{def:1}
A $(K, N)$ \emph{computation code} for an $(M_1,M_2;N_1,N_2)$ channel is denoted as $\mathfrak{C}_{K,N}=(\mathcal{E}_1,\mathcal{E}_2,\mathcal{D}_1,\mathcal{D}_2)$ and consists of the following.
\begin{itemize}
    \item 
    Encoding functions $\mathcal{E}_i: \mathbb{C}^K\rightarrow \mathbb{C}^{NM_i}$ for $i\in\{1,2\}$ such that $\mathcal{E}_1(s_1^K) = \ve{x}_1^N$, $\mathcal{E}_2(s_2^K) = \ve{x}_2^N$, and \eqref{eq:P} is satisfied.
%    \begin{align}
%        \mathcal{E}_1(s_1^K) &= \ve{x}_1^N, \\
%        \mathcal{E}_2(s_2^K) &= \ve{x}_2^N,
%    \end{align}
%    and \eqref{eq:P} is satisfied.
    \item 
    Decoding (estimation) functions $\mathcal{D}_i:\mathbb{C}^{NN_i}\rightarrow\mathbb{C}^{K}$ for $i\in\{1,2\}$ such that, $\mathcal{D}_1(\ve{y}_1^N) = \hat{f}_1^K$, $\mathcal{D}_2(\ve{y}_2^N) = \hat{f}_2^K$.
%    \begin{align}
%        \mathcal{D}_1(\ve{y}_1^N) &= \hat{f}_1^K, \\
%        \mathcal{D}_2(\ve{y}_2^N) &= \hat{f}_2^K.
%    \end{align}
\end{itemize}

We may suppress the $K,N$ subscripts for compact notation, writing $\mathfrak{C}$ instead, when there is no room for ambiguity. 
The MSE for a computation code  $\mathfrak{C}$, employed on the channel $\mathscr{H}$, is defined as,
\begin{align}
    \mathrm{MSE}(\mathfrak{C}, \mathscr{H})  &\triangleq  
    \sum_{j \in \{1,2\}, \ \kappa \in [1:K]} \mathbb{E} \left[ \lVert f(\kappa) - \hat{f}_j(\kappa)\rVert^2 \right],
\end{align}
where the expectation is over the data sequences and noise. 
\end{definition}

\begin{definition}(AirComp DoF) We say that $(K,N)$ computation is feasible (in the DoF sense) on the channel $\mathscr{H}$ if for each $\rho$, there exists a computation code $\mathfrak{C}_{K,N}(\rho)$ such that $\mathrm{MSE}\!\left(\mathfrak{C}_{K,N}(\rho),\mathscr{H}\right)=O(1/\rho)$ as $\rho\rightarrow\infty$. A value $d\in \mathbb{R}_{\ge 0}$ is said to be an \emph{achievable AirComp DoF} on the channel $\mathscr{H}$ if there exist $K,N\in\mathbb{N}$, such that $d\leq K/N$, and $(K,N)$ computation is feasible on $\mathscr{H}$. The supremum of all achievable AirComp DoF values on the channel $\mathscr{H}$  is called the \emph{AirComp DoF} of  $\mathscr{H}$ and is denoted by $\mathrm{ACDoF}(\mathscr{H})$. For a random channel $\mathcal{H}$, we say that the AirComp DoF is \emph{almost surely} equal to $x$, and write   $\mathrm{ACDoF}(\mathcal{H}) \overset{\mathrm{a.s.}}{=} x,$
if     $\Pr\!\left(\mathcal{H}\in \left\{\mathscr{H}:\ \mathrm{ACDoF}(\mathscr{H})=x\right\}\right)=1.$
\end{definition}

As a sanity check, Appendix~\ref{appendix_sisomac} shows that the AirComp DoF is precisely equal to $1$ for the conventional AirComp setting, i.e., a two-user SISO MAC obtained by setting $ \ve{y}_1= \ve{y}_2$ in \eqref{eq:modely1}. 

\section{Results} \label{sec:results}
\subsection{Single-Input Single-Output (SISO) Setting}
 We begin with the SISO setting where each transmitter and receiver is equipped with only a single antenna.
\begin{theorem} \label{thm:1}
For a   $(1,1;1,1;\mathscr{H})$ setting with $\m{H}(\nu)$ defined as \eqref{eq:a_H},
 if $\forall\nu\in\mathbb{N}$, we have 
 \begin{align}h(\nu)\triangleq\frac{\m{H}_{11}(\nu)\m{H}_{22}(\nu)}{\m{H}_{12}(\nu)\m{H}_{21}(\nu)}\in\mathbb{C}\setminus\mathbb{R},
 \end{align} then $\mathrm{ACDoF}(\mathscr{H}) = 2/3.$
\end{theorem}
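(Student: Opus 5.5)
The proof splits into an achievability part, $\mathrm{ACDoF}\ge 2/3$, and a converse, $\mathrm{ACDoF}\le 2/3$.

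\emph{Achievability.} We use a linear scheme with $K=2$ and $N=3$: three channel uses, each transmitter sending linear combinations of its two symbols. Write $\ve{x}_i = \m{V}_i \ve{s}_i$, where $\m{V}_i\in\mathbb{C}^{3\times 2}$ and $\ve{s}_i=(s_i(1),s_i(2))^T$. Stacking the three uses, Rx-$j$ observes
\begin{align*}
\ve{y}_j=\m{D}_{j1}\m{V}_1\ve{s}_1+\m{D}_{j2}\m{V}_2\ve{s}_2+\ve{z}_j,
\end{align*}
where $\m{D}_{ji}=\mathrm{diag}(\m{H}_{ji}(1),\m{H}_{ji}(2),\m{H}_{ji}(3))$. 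A zero-forcing estimate of $\ve{s}_1+\ve{s}_2$ with MSE $O(1/\rho)$ exists if and only if $\ve{s}_1+\ve{s}_2$ is a linear function of the noiseless output, i.e.\ there is $\m{U}_j\in\mathbb{C}^{2\times 3}$ with $\m{U}_j\m{D}_{j1}\m{V}_1=\m{U}_j\m{D}_{j2}\m{V}_2=\m{I}_2$. A sufficient condition is that the $3\times 4$ matrix $[\m{D}_{j1}\m{V}_1,\ \m{D}_{j2}\m{V}_2]$ has rank $3$ while its $2$-dimensional "difference" direction aligns. Concretely, we want $\mathrm{span}(\m{D}_{j1}\m{V}_1-\m{D}_{j2}\m{V}_2)$ to be one-dimensional (alignment of one interference dimension), while both $\m{D}_{j1}\m{V}_1$ and $\m{D}_{j2}\m{V}_2$ have full column rank with a common 2-dimensional projection.

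A simpler route is to let the $3\times 2$ columns satisfy $\m{D}_{j1}\m{V}_1\ve{e}_k-\m{D}_{j2}\m{V}_2\ve{e}_k\in\mathrm{span}(\ve{w}_j)$ for $k=1,2$, for some vector $\ve{w}_j$ that is not in $\mathrm{span}(\m{D}_{j1}\m{V}_1)$. Projecting out $\ve{w}_j$ then yields $\m{P}\m{D}_{j1}\m{V}_1(\ve{s}_1+\ve{s}_2)$, which can be inverted. We solve for $\m{V}_1,\m{V}_2,\ve{w}_1,\ve{w}_2$ explicitly. Since all matrices are diagonal, the condition reduces entrywise to
\begin{align*}
\m{V}_2=\m{D}_{j2}^{-1}\m{D}_{j1}\m{V}_1-\m{D}_{j2}^{-1}\ve{w}_j\ve{a}_j^T \quad\text{for } j=1,2.
\end{align*}
Equating the two expressions for $\m{V}_2$ gives $(\m{G}_1-\m{G}_2)\m{V}_1=\m{D}_{12}^{-1}\ve{w}_1\ve{a}_1^T-\m{D}_{22}^{-1}\ve{w}_2\ve{a}_2^T$, where $\m{G}_j=\m{D}_{j2}^{-1}\m{D}_{j1}$. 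The matrix $\m{G}_1-\m{G}_2$ is diagonal, and its entries are nonzero whenever $h(\nu)\neq 1$, which holds because $h\notin\mathbb{R}$. So $\m{V}_1$ is determined, and we choose $\ve{w}_j$ and $\ve{a}_j$ (e.g.\ $\ve{a}_1=\ve{e}_1$, $\ve{a}_2=\ve{e}_2$) so that the rank conditions hold. The hypothesis $h(\nu)\notin\mathbb{R}$ should be exactly what is needed to pick the $\ve{w}_j$ satisfying the non-degeneracy conditions, possibly by exploiting conjugation, i.e.\ treating the complex channel as a real $2\times2$ MIMO channel. An alternative that may be cleaner is to use a single channel use viewed over $\mathbb{R}^2$. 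Real and imaginary parts give a real $2\times 2$ channel per link, with each transmitter sending proper-versus-improper real combinations, and a scheme achieving $4/3$ real dimensions per use amounts to $2/3$ complex DoF. The condition $h\notin\mathbb{R}$ is precisely what makes the real $2\times2$ representations of the two links non-commuting in the relevant sense. I would first try this real-decomposition construction, repeating it over symbol extensions ($N=3$ complex uses, giving 6 real dimensions and 4 real symbols) with asymmetric complex signalling.

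\emph{Converse.} We show $K/N\le 2/3$. A DoF-sense MSE of $O(1/\rho)$ means each receiver recovers $\ve{s}_1+\ve{s}_2$ to within noise-level accuracy. Giving Rx-$1$ a genie of side information, e.g.\ $\ve{s}_1$ at noise resolution, lets it recover both $\ve{s}_1$ and $\ve{s}_2$. I would then combine the two receivers' observations in a sum-rate-type bound, $2K\le\ldots$, and deduce $3K\le 2N$ in DoF units. The likely argument runs through entropy/differential-entropy bounds: $h(\ve{y}_1^N)+h(\ve{y}_2^N)\le 2N\log\rho$, while the data-processing requirement from the MSE constraint forces $\ve{y}_1$ and $\ve{y}_2$ jointly to reveal $\ve{s}_1$, $\ve{s}_2$ and an extra $K$ dimensions' worth of $\ve{s}_1+\ve{s}_2$.

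I expect this converse, specifically converting the MSE constraint into mutual-information statements at Gaussian resolution ($I\ge K\log\rho-o(\log\rho)$ via rate–distortion), to be the main obstacle.
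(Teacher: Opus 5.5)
\textbf{Achievability.} Your $\mathbb{C}$-linear alignment framework is sound, but it only covers the case where the channel varies across the three uses, and your expectation about the time-invariant case is wrong. Normalize so that $\m D_{11}=\m D_{12}=\m D_{21}=\m I$ and $\m D_{22}=\m{\Lambda}=\mathrm{diag}(h(1),h(2),h(3))$. With $\ve a_1=\ve e_1$, $\ve a_2=\ve e_2$ you get $\m V_1=(\m I-\m\Lambda)^{-1}[-\m\Lambda\ve w_1,\ \ve w_2]$. Your rank conditions then reduce to nonsingularity of $[\ve w_1,\m\Lambda\ve w_1,\ve w_2]$ (Rx-$1$) and of $[\m\Lambda\ve w_1,\ve w_2,\m\Lambda\ve w_2]$ (Rx-$2$). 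These can be met whenever $\m\Lambda$ is not a scalar matrix; this is the paper's first case, which it handles with an explicit table. The hypothesis $h(\nu)\notin\mathbb{R}$ does not help you choose $\ve w_j$. If $h(1)=h(2)=h(3)$, then for every choice of $\ve a_1,\ve a_2$ the columns of $\m V_1$ lie in $\mathrm{span}(\ve w_1,\ve w_2)$, so $[\m V_1,\ve w_1]$ has rank at most $2$. Hence no $\mathbb{C}$-linear $(2,3)$ scheme exists, whether or not $h$ is real. The time-invariant case is exactly where the hypothesis $h\notin\mathbb{R}$ is used, and there your proof rests entirely on the asymmetric-complex-signaling construction that you name but do not build. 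The mechanism you suggest is also off: the matrices $\mathcal{W}(h)$ all commute, so non-commutativity is not the point. What matters is that $\mathcal{W}(h)^{[3]}$ is not a real scalar matrix when $\beta\neq 0$. Then the same alignment goes through over $\mathbb{R}^6$ with $2$-dimensional null spaces; for example, $\mathrm{span}(\ve e_1,\ve e_3)$ together with its image under $\mathcal{W}(h)^{[3]}$ spans four real dimensions. The paper supplies explicit $\m E_1,\m E_2\in\mathbb{R}^{6\times 4}$ and decoders for this case.

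\textbf{Converse.} Your converse is a heuristic, and it is missing the ingredient that makes it work. The step you flag as the main obstacle is actually the easy one. The paper gets $h(f^K\mid \ve y_j^N)\le K\log\bigl(\frac{\pi e}{K}\mathrm{MSE}\bigr)$ directly from the Gaussian max-entropy bound, with no rate--distortion argument. The crux is how the two receivers are combined. Start from $2K\log\rho\le\sum_j\bigl[h(\ve y_j^N)-h(\ve y_j^N\mid f^K)\bigr]$, and use subadditivity:
\[
h(\ve y_1^N\mid f^K)+h(\ve y_2^N\mid f^K)\ge h(\ve y_1^N,\ve y_2^N\mid f^K)=h(s_1^K\mid f^K)-h(s_1^K\mid \ve y_1^N,\ve y_2^N,f^K)+h(\ve z_1^N,\ve z_2^N).
\]
Then you need an invertibility bound, $h(s_1^K\mid \ve y_1^N,\ve y_2^N,f^K)\le -K\log\rho+o(\log\rho)$. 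It holds because each $2\times 2$ matrix $\m H(\nu)$ is nonsingular, i.e.\ $h(\nu)\neq 1$: from $(\ve y_1^N,\ve y_2^N)$ one obtains noisy $\ve x_1$, and given $s_2^K$ one can re-synthesize a slightly noisier $\ve y_1^N$, which yields $s_1^K$ at noise resolution. Combining these gives $2K\le 2N-K$. Your sketch never says where the channel hypothesis enters, and it must enter: if $h(\nu)\equiv 1$, both receivers see the same MAC and the AirComp DoF is $1$. The extra $K$ in the count is the residual entropy of $(\ve y_1^N,\ve y_2^N)$ given $f^K$. It is forced by channel invertibility, not by the MSE constraint, and the genie you describe (handing $s_1$ to Rx-$1$) does not by itself produce it.
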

In words, the SISO setting has AirComp DoF equal to $2/3$ if, at each channel use $\nu$, the ratio $h(\nu)$ is  a complex number with a non-zero imaginary part. Note that the condition also implies that all channels are non-zero. The proof is provided in Section~\ref{sec:5}.

Observe that Theorem \ref{thm:1} includes settings with time-invariant channels, i.e., $h_{ij}(\nu)=h_{ij}$ for all $\nu\in\mathbb{N}$, as well as arbitrarily time-varying channels. Interestingly, while the AirComp DoF value in both cases is $2/3$, the achievable schemes used for the proof in the two cases are quite different. Considering $N=3$ channel uses in both cases, if the channel is fixed (time-invariant), then the achievable scheme relies on asymmetric complex signaling \cite{Cadambe_Jafar_Wang}, i.e., the coding is   \emph{non-linear} over $\mathbb{C}$ but linear over $\mathbb{R}$. In this case, it is important that  $h(\nu)=h$ is a complex number with a non-zero imaginary part. On the other hand, if the channel varies at all, i.e., at least one of the three $h(\nu)$ values is distinct from the rest, then asymmetric complex signaling is not needed, a coding scheme that is linear over $\mathbb{C}$ suffices. Indeed the phase of $h(\nu)$ is not important in this case, and it suffices that the channels are non-zero (needed for achievability bound, i.e., $\mathrm{ACDoF}(\mathscr{H}) \geq 2/3$) and $h(\nu)\neq 1$ (needed for the converse bound, i.e., $\mathrm{ACDoF}(\mathscr{H}) \leq 2/3$).

The following corollary is an immediate consequence of Theorem \ref{thm:1}, by noting that when the channels are chosen from a continuous distribution, the ratio $h(\nu)$ has a non-zero imaginary part almost surely.
\begin{corollary} \label{cor:siso_generic}
For a generic $(1,1;1,1;\mathcal{H})$ network, $\mathrm{ACDoF}(\mathcal{H}) \overset{\mathrm{a.s.}}{=} \tfrac{2}{3}$.
\end{corollary}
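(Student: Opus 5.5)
The plan is to reduce Corollary~\ref{cor:siso_generic} to Theorem~\ref{thm:1}. It suffices to show that, for a generic $(1,1;1,1;\mathcal{H})$ network, the event $E\triangleq\{h(\nu)\in\mathbb{C}\setminus\mathbb{R}\ \text{for all}\ \nu\in\mathbb{N}\}$ has probability one. On $E$, Theorem~\ref{thm:1} gives $\mathrm{ACDoF}(\mathscr{H})=2/3$ for each realization. Hence $\Pr(\mathcal{H}\in\{\mathscr{H}:\mathrm{ACDoF}(\mathscr{H})=2/3\})\ge\Pr(E)=1$, which is exactly the definition of $\mathrm{ACDoF}(\mathcal{H})\overset{\mathrm{a.s.}}{=}2/3$.

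\textbf{Single channel use.} I would show that for a fixed $\nu$, $\Pr(h(\nu)\in\mathbb{R})=0$. The four entries $\m{H}_{ij}(\nu)$ are i.i.d.\ from a continuous distribution on $\mathbb{C}$, so no entry equals zero almost surely and $h(\nu)$ is well defined. Then condition on $\m{H}_{11}(\nu),\m{H}_{12}(\nu),\m{H}_{21}(\nu)$. The event $h(\nu)\in\mathbb{R}$ becomes $\m{H}_{22}(\nu)\in c\,\mathbb{R}$, where $c=\m{H}_{12}(\nu)\m{H}_{21}(\nu)/\m{H}_{11}(\nu)\neq0$. This set is a real line through the origin in $\mathbb{C}\cong\mathbb{R}^2$, so it has Lebesgue measure zero.

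\textbf{Main obstacle.} The one point needing care is the phrase ``continuous distribution''. I would interpret it, as is standard in generic-channel DoF arguments, as absolutely continuous with respect to Lebesgue measure on $\mathbb{C}$. Under that reading, $\m{H}_{22}(\nu)$, being independent of the conditioning variables, lies in the line $c\,\mathbb{R}$ with conditional probability zero. Integrating over the conditioning variables (Fubini) then gives $\Pr(h(\nu)\in\mathbb{R})=0$.

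\textbf{All channel uses.} Finally, take the countable union bound over $\nu\in\mathbb{N}$:
\begin{align*}
\Pr(E^c)\le\sum_{\nu\in\mathbb{N}}\Pr\left(h(\nu)\in\mathbb{R}\right)=0 .
\end{align*}
This union bound needs no independence across $\nu$, so it covers time-invariant, i.i.d.\ and arbitrarily time-varying generic channels alike. This completes the reduction to Theorem~\ref{thm:1}.
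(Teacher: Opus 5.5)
Your proposal is correct and follows the same route as the paper, which obtains the corollary from Theorem~\ref{thm:1} by noting that $h(\nu)$ has a nonzero imaginary part almost surely when the channels are continuously distributed. You fill in the details the paper leaves implicit: the measure-zero line argument for a single $\nu$, the countable union over $\nu\in\mathbb{N}$, and the necessary reading of ``continuous'' as absolutely continuous with respect to Lebesgue measure on $\mathbb{C}$.
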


\subsection{Multiple-Input Multiple-Output (MIMO) Setting}
\noindent The AirComp DoF for a generic MIMO setting are characterized in the following theorem.
\begin{theorem} \label{thm:4}
For a generic $(M_1, M_2; N_1, N_2; \mathcal{H})$ network,
\begin{align}
	\mathrm{ACDoF}(\mathcal{H}) &\overset{\mathrm{a.s.}}{=} \min \Biggl\{M_1, M_2, N_1, N_2, \notag\\
	&\qquad\qquad\max\Bigl\{\tfrac{M_1+M_2}{3},\tfrac{N_1+N_2}{3}\Bigr\}\Biggr\}.
\end{align}
\end{theorem}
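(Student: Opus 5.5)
We prove Theorem~\ref{thm:4} by matching a converse with an achievable scheme. Both rely on the observation that any scheme which computes $s_1+s_2$ at Rx-$j$ with MSE $O(1/\rho)$ must, once the generic channel is given, let Rx-$j$ separate the ``desired'' subspace carrying $s_1^K+s_2^K$ from the ``residual'' subspace carrying the independent component (for instance $s_1^K-s_2^K$). Without loss of generality we take the schemes to be linear over $\mathbb{R}$ (or, with symbol extensions, over $\mathbb{C}$) and write $\ve{x}_i^N=\m{V}_i s_i^K$, with $\m{V}_i$ of size $NM_i\times K$.

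\textbf{Converse.} The bounds $d\le M_i$ and $d\le N_j$ follow from cut-set style arguments. Giving Rx-$j$ the symbols $s_{3-i}^K$ turns the sum computation into estimating $s_i^K$ with MSE $O(1/\rho)$. Each of the $K$ Gaussian symbols must then be conveyed with distortion $O(1/\rho)$, which requires $K\log\rho$ bits of mutual information. A single-user MIMO link supports only $\min(M_i,N_j)N\log\rho+o(\log\rho)$, so $K/N\le\min(M_i,N_j)$.

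For the remaining bound $d\le\max\{M_1+M_2,N_1+N_2\}/3$ we generalize the SISO converse behind Theorem~\ref{thm:1}. Rx-$j$ observes $\m{H}_{j1}\m{V}_1 s_1+\m{H}_{j2}\m{V}_2 s_2$. Computing the sum forces the column spaces of $\m{H}_{j1}\m{V}_1$ and $\m{H}_{j2}\m{V}_2$ to be aligned on a $K$-dimensional subspace, in the sense that $\m{H}_{j1}\m{V}_1-\m{H}_{j2}\m{V}_2$ lies in a space resolvable from the rest. We intend to use a dimension counting argument: let $r_i=\rank{\m{V}_i}$, and let $a_j$ be the dimension of the aligned part at Rx-$j$, so that $N N_j\ge r_1+r_2-a_j$ with $a_j\le K$. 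For generic channels, alignment at both receivers simultaneously costs dimensions. Intersecting the two alignment conditions yields a subspace of $\mathrm{span}(\m{V}_1)\oplus\mathrm{span}(\m{V}_2)$ that is invariant under the generic ``cross ratio'' operator $\m{H}_{11}^{-1}\m{H}_{12}\m{H}_{22}^{-1}\m{H}_{21}$ (suitably generalized to non-square matrices). Generically this operator has no invariant subspace aligned with both receivers beyond a limited dimension.

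The counting we aim for is $3K\le N(N_1+N_2)$ whenever $N_1+N_2\ge M_1+M_2$, and symmetrically $3K\le N(M_1+M_2)$ via a reciprocity or dual argument. The role of $\max$ is that the limiting side is the one with more antennas; the smaller side is already covered by the $\min\{M_i,N_j\}$ terms. Turning the genericity claim, which says that no nontrivial alignment survives across both receivers, into a rigorous rank statement valid under arbitrary symbol extensions and real-linear (asymmetric complex) schemes is the main obstacle. We would attempt it by treating the channel as block-diagonal over $N$ uses and applying a Schwartz--Zippel type argument on the determinant of the stacked alignment matrix. If the scheme is nonlinear, we would replace rank by a differential-entropy/DoF argument, using a genie providing $s_1^K-s_2^K$ through a noisy channel.

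\textbf{Achievability.} We first take $N=3$ symbol extensions with time variation, or asymmetric complex signaling as in Theorem~\ref{thm:1}. We then construct beamformers $\m{V}_1,\m{V}_2$ so that at each Rx-$j$ the images $\m{H}_{j1}\m{V}_1$ and $\m{H}_{j2}\m{V}_2$ coincide up to an invertible map, while the sum signal is zero-forced away from the remaining dimensions. When $\max\{\cdot\}/3$ is the active term, the design mimics the SISO $2/3$ scheme. Beamforming is done along common eigenvector directions of the generic cross operators, and a generic-position argument shows that the required intersections of subspaces have the predicted dimensions almost surely. When $\min\{M_i,N_j\}$ is the active term, we use zero-forcing or decorrelating of the unused dimensions. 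Finally, a zero-forcing receiver has bounded noise enhancement almost surely, which gives MSE $O(1/\rho)$.
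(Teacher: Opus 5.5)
Your cut-set bounds for $\min\{M_1,M_2,N_1,N_2\}$ are fine. The converse for the term $\tfrac13\max\{M_1+M_2,N_1+N_2\}$, however, has a genuine gap.

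Restricting to (real-)linear schemes is not without loss of generality. $\mathrm{ACDoF}$ is defined over arbitrary encoders $\mathcal{E}_i$ and decoders $\mathcal{D}_j$, so a rank count over beamformers cannot establish the stated upper bound. Your nonlinear fallback (\emph{a genie providing $s_1^K-s_2^K$ through a noisy channel}) is not an argument.

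Even for linear schemes, the step you flag as the main obstacle points at the wrong mechanism. The bound $3K\le N(N_1+N_2)$ does not come from the cross operator $\m{H}_{11}^{-1}\m{H}_{12}\m{H}_{22}^{-1}\m{H}_{21}$ lacking invariant subspaces. It holds for every realization whose stacked $\m{H}(\nu)$ has full column rank. For example, it holds for a time-invariant SISO channel with real $h\neq 1$, where that operator is a scalar and every subspace is invariant.

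The missing idea, which is the paper's route, is an entropy argument valid for any code: sum over the two receivers, then let them cooperate.
\begin{itemize}
\item The MSE--equivocation bound gives $I(f^K;\ve{y}_j^N)\ge K\log\rho+o(\log\rho)$ for $j=1,2$.
\item Next, $-h(\ve{y}_1^N|f^K)-h(\ve{y}_2^N|f^K)\le -h(\ve{y}_1^N,\ve{y}_2^N|f^K)=h(s_1^K|\ve{y}_1^N,\ve{y}_2^N,f^K)-h(s_1^K|f^K)-h(\ve{z}_1^N,\ve{z}_2^N)$.
\item An invertibility lemma bounds the first term: left-invert $\m{H}(\nu)$, hand a genie $s_2^K$, rebuild a noisier copy of $\ve{y}_1^N$, and reuse Rx-$1$'s ability to estimate $f^K$. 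This yields $h(s_1^K|\ve{y}_1^N,\ve{y}_2^N)\le -K\log\rho+o(\log\rho)$.
\item With the power bound on $h(\ve{y}_1^N)+h(\ve{y}_2^N)$, this gives $2K\log\rho\le (N(N_1+N_2)-K)\log\rho+o(\log\rho)$.
\end{itemize}

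For $M_1+M_2>N_1+N_2$, the reciprocity or dual argument you invoke is unavailable for nonlinear codes. The paper instead appends virtual receive antennas so that $\m{H}(\nu)$ becomes square and invertible, which can only help the receivers.

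Incidentally, for linear schemes your count closes with no genericity beyond full column rank. Zero-forcing $s_1-s_2$ at Rx-$j$ forces $\mathrm{rank}(\m{H}_{j1}^{[N]}\m{V}_1-\m{H}_{j2}^{[N]}\m{V}_2)\le NN_j-K$. These two ranks sum to at least $\mathrm{rank}\bigl[\begin{smallmatrix}\m{V}_1\\ -\m{V}_2\end{smallmatrix}\bigr]\ge 2K-\mathrm{rank}\bigl[\begin{smallmatrix}\m{V}_1\\ \m{V}_2\end{smallmatrix}\bigr]\ge K$.

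Your remark that the smaller-side bound is \emph{already covered by the min terms} is also wrong. When $N_1+N_2\ge M_1+M_2$, $(M_1+M_2)/3$ is simply not an upper bound: the $(1,1;2,2)$ network has AirComp DoF $1>2/3$.

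Your achievability is likewise a plan rather than a construction. \emph{Common eigenvector directions of the generic cross operators} are not defined for non-square blocks $\m{H}_{ji}$. The sketch does not say how to attain $(N_1+N_2)/3$ when $N_1+N_2$ is not a multiple of $3$. Nor does it say how the min-limited regimes are met; these generally require mixing different alignment structures rather than zero-forcing alone.

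The paper does this by subnetwork packing:
\begin{itemize}
\item It fills the network with $a$ copies of $(1,1;1,1)$ (DoF $2/3$ over $3$ uses), plus $b$ copies of $(1,1;2,1)$ and $c$ copies of $(1,1;1,2)$ (DoF $1$ each), with explicit integers $(a,b,c)$ for each regime.
\item It implements all copies jointly by choosing combiners in the left null space of the precoded stacked channel, so both receivers see the same $K\times K$ MAC.
\item It proves almost-sure full rank of the four effective channels by exhibiting one assignment of channels and precoders for which all their determinants are nonzero.
\item The transmit-abundant regime uses the dual packing with $(2,1;1,1)$ and $(1,2;1,1)$.
\end{itemize}
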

Achievability is proved in Section~\ref{proof:MIMOach}, and a tight information theoretic converse is provided in Section \ref{proof:MIMOconverse}.

\section{SISO network: Proof of Theorem \ref{thm:1}}\label{sec:5}
\subsection{Proof of Achievability}

Let us construct a coding scheme over $N=3$ channel uses. Starting with the model in \eqref{eq:modely1}, we will temporarily denote the original input, output and noise variables from Section II with a prime (e.g., $\bm{x}_1'$). To simplify our code design, we can represent the channel in the following normalized form,
\begin{align}
    \ve{y}_1(\nu) &= \ve{x}_1(\nu) + \ve{x}_2(\nu) + \ve{z}_1(\nu), \\
    \ve{y}_2(\nu) &= \ve{x}_1(\nu) + h (\nu) \ve{x}_2(\nu) + \ve{z}_2(\nu),
\end{align}
by defining the new un-primed effective variables as $\ve{x}_1(\nu)=\m{H}_{11}(\nu)\bm{x}_1'(\nu)$, $\ve{x}_2(\nu)=\m{H}_{12}(\nu)\bm{x}_2'(\nu)$, $\ve{y}_1(\nu)=\ve{y}_1'(\nu)$, $\ve{y}_2(\nu)=\tfrac{\m{H}_{11}(\nu)}{\m{H}_{21}(\nu)}\bm{y}_2'(\nu)$, $h(\nu) = \tfrac{\m{H}_{11}(\nu)\m{H}_{22}(\nu)}{\m{H}_{21}(\nu)\m{H}_{12}(\nu)}$, $\ve{z}_1(\nu)=\bm{z}_1'(\nu)$, and $\ve{z}_2(\nu)=\tfrac{\m{H}_{11}(\nu)}{\m{H}_{21}(\nu)}\bm{z}_2'(\nu)$.
 
 We distinguish between two cases. In the first case, the channel remains fixed (time-invariant) across the three channel uses, i.e., $h(1)=h(2)=h(3)$. In the second case, this condition does not hold, so that at least one of $h(1), h(2), h(3)$ differs from the other two; without loss of generality, let us assume $h(3) \notin \{h(1),h(2)\}$. We begin with the latter case. Throughout the analysis, we assume that $h(1),h(2),h(3)\in \mathbb{C}\setminus\mathbb{R}$.

{\bf Case:} $h(3)\notin\{h(1), h(2)\}$. The following table lists the inputs over the three  channel uses, along with the corresponding outputs at the two receivers. Noise terms are omitted for compact notation. The symbols $a_1,a_2,b_1,b_2$ will be defined shortly. 
\begin{align}
\begin{array}{|c|c|c|c|}\hline
\nu& (\ve{x}_1(\nu),\ve{x}_2(\nu))& \ve{y}_1(\nu)&\ve{y}_2(\nu)\\\hline
\nu=1& (a_1,b_1)&a_1+b_1&a_1+h(1)b_1\\
\nu=2& (a_2,b_2)&a_2+b_2&a_2+h(2)b_2\\
\nu=3& (a_2,b_1)&a_2+b_1&a_2+h(3)b_1\\\hline
\end{array}
\end{align}
The key idea is that each receiver forms two carefully chosen linear combinations of its three observations so that both receivers recover the same two effective equations (up to noise), one for each computation instance. Each receiver combines its outputs as follows (noise terms are omitted). For the first computation instance,
\begin{align}
	\text{Rx-}1:\quad &\mbox{$\left(\frac{h(3)-1}{1-h(1)}\right)$}\ve{y}_1(1)+\ve{y}_1(3) \notag\\
	&\qquad=
	\mbox{$\left(\frac{h(3)-1}{1-h(1)}\right)$}a_1+a_2
	+\mbox{$\left(\frac{h(3)-h(1)}{1-h(1)}\right)$}b_1
	\label{eq:Rx1n1}\\
	\text{Rx-}2: \quad	&\mbox{$\left(\frac{h(3)-1}{1-h(1)}\right)$}\ve{y}_2(1)+\ve{y}_2(3) \notag\\
	&\qquad=
	\mbox{$\left(\frac{h(3)-1}{1-h(1)}\right)$}a_1+a_2
	+\mbox{$\left(\frac{h(3)-h(1)}{1-h(1)}\right)$}b_1
	\label{eq:Rx2n1}
\end{align}
and for the second computation instance,
\begin{align}
	\text{Rx-}1: \quad&\mbox{$\left(\frac{h(3)-1}{1-h(2)}\right)$}h(2)\ve{y}_1(2)+h(3)\ve{y}_1(3) \notag\\
	&\quad=
	\mbox{$\left(\frac{h(3)-h(2)}{1-h(2)}\right)$}a_2+h(3)b_1 +
	\mbox{$\left(\frac{h(3)-1}{1-h(2)}\right)$}h(2)b_2
	\label{eq:Rx1n2}\\
	\text{Rx-}2: \quad	&\mbox{$\left(\frac{h(3)-1}{1-h(2)}\right)$}\ve{y}_2(2)+\ve{y}_2(3) \notag\\
	&\quad=
	\mbox{$\left(\frac{h(3)-h(2)}{1-h(2)}\right)$}a_2+h(3)b_1 +
	\mbox{$\left(\frac{h(3)-1}{1-h(2)}\right)$}h(2)b_2
	\label{eq:Rx2n2}
\end{align}
Observe that the terms in the denominator are non-zero, e.g., $1-h(1)\neq 0$ because $h(1)\in\mathbb{C}\setminus\mathbb{R}$, whereas $1\notin \mathbb{C}\setminus\mathbb{R}$. Also note that the terms in the numerator are non-zero, e.g., $h(3)\neq h(1)$ because of the assumption that $h(3)\notin\{h(1),h(2)\}$. Comparing \eqref{eq:Rx1n1} with \eqref{eq:Rx2n1}, and \eqref{eq:Rx1n2} with \eqref{eq:Rx2n2}, observe that both receivers obtain the same linear combinations (plus noise terms that are not shown).
To ensure that the linear combinations represent the desired computation, we need the following,
\begin{align}
\mbox{$\left(\frac{h(3)-1}{1-h(1)}\right)$}a_1+a_2&=\mu s_1(1)\\
\mbox{$\left(\frac{h(3)-h(2)}{1-h(2)}\right)$}a_2&=\mu s_1(2)\\
\mbox{$\left(\frac{h(3)-h(1)}{1-h(1)}\right)$}b_1&=\mu s_2(1)\\
h(3)b_1+\mbox{$\left(\frac{h(3)-1}{1-h(2)}\right)$}h(2)b_2&=\mu s_2(2)
\end{align}
which is ensured by defining $a_1,a_2,b_1,b_2$ as follows.
\begin{align}
a_1&=\mu\mbox{$\left(\frac{1-h(1)}{h(3)-1}\right)$}\left(s_1(1)-\mbox{$\left(\frac{1-h(2)}{h(3)-h(2)}\right)$}s_1(2)\right)\\
a_2&=\mu\mbox{$\left(\frac{1-h(2)}{h(3)-h(2)}\right)$}s_1(2)\\
b_1&=\mu\mbox{$\left(\frac{1-h(1)}{h(3)-h(1)}\right)$}s_2(1)\\
b_2&=\mu\mbox{$\left(\frac{1-h(2)}{(h(3)-1)h(2)}\right)$}\left(s_2(2)-h(3)\mbox{$\left(\frac{1-h(1)}{h(3)-h(1)}\right)$}s_2(1)\right).
\end{align}
Here, $\mu>0$ is a scaling factor  chosen to ensure that the transmit power constraint is not violated. It depends on the values $h(\nu)$, but in particular does not depend on the noise  parameter $\sigma^2$. By the same token, the noise present in each computation output is scaled by a factor that depends on channel coefficients but not on $\sigma^2$. Thus the MSE achieved by the coding scheme is $O(\sigma^2)$. Since $K=2$ instances of the desired computation are achieved over $N=3$ channel uses, the AirComp DoF value achieved  is $2/3$. We now proceed to the remaining case.

{\bf Case:} $h(1)=h(2)=h(3)=\alpha+j\beta\in\mathbb{C}, \beta\neq 0$. Our solution for this case is inspired by the idea of asymmetric complex signaling that was previously introduced for interference alignment schemes in \cite{Cadambe_Jafar_Wang}. Our design treats each complex symbol as a two-dimensional real vector and constructs the code linearly over the real domain. Thus, while the scheme is $\mathbb{R}$-linear, it is generally not $\mathbb{C}$-linear after converting back to the complex representation. Let $\mathcal{T}: \mathbb{C} \to \mathbb{R}^{2 \times 1}$ and $\mathcal{W}: \mathbb{C} \to \mathbb{R}^{2 \times 2}$ be real-valued operators defined as:
\begin{align}
    \mathcal{T}(X) &\triangleq \begin{bsmallmatrix} \Re\{X\} \\ \Im\{X\}  \end{bsmallmatrix}, &&
    \mathcal{W}(h) \triangleq \begin{bsmallmatrix} \Re\{h\} & -\Im\{h\} \\ \Im\{h\} & \Re\{h\} \end{bsmallmatrix}.
\end{align}
Thus, we can write the network equations as
\begin{align}
    \mathcal{T}(\ve{y}_1(\nu)) &= \mathcal{T}(\ve{x}_1(\nu)) + \mathcal{T}(\ve{x}_2(\nu)) + \mathcal{T}(\ve{z}_1(\nu)), \\
    \mathcal{T}(\ve{y}_2(\nu)) &= \mathcal{T}(\ve{x}_1(\nu)) + \mathcal{W}(h) \mathcal{T}(\ve{x}_2(\nu)) + \mathcal{T}(\ve{z}_2(\nu)).
\end{align}
Stacking the equations over $3$ channel uses yields
\begin{align}
    \mathcal{T}(\ve{y}_1)^{(3)}
    &=
    \mathcal{T}(\ve{x}_1)^{(3)}
    +
    \mathcal{T}(\ve{x}_2)^{(3)}
    +
    \mathcal{T}(\ve{z}_1)^{(3)},\\
    \mathcal{T}(\ve{y}_2)^{(3)}
    &=
    \mathcal{T}(\ve{x}_1)^{(3)}
    +
    \mathcal{W}(h)^{[3]}
    \mathcal{T}(\ve{x}_2)^{(3)}
    +
    \mathcal{T}(\ve{z}_2)^{(3)}.
\end{align}
The encoders are defined as  $\mathcal{T}(\ve{x}_1)^{(3)} = \m{E}_1 \mathcal{T}(s_1)^{(2)}$, and $\mathcal{T}(\ve{x}_2)^{(3)} = \m{E}_2 \mathcal{T}(s_2)^{(2)}$, where 
%$\m{E}_i \in \mathbb{R}^{6 \times 4}$ for $i \in \{1,2\}$ are given by
\begin{align}
    \m{E}_1 &= \mu
    \begin{bmatrix}
        \ve{e}_2 + \ve{e}_6 & \ve{e}_4 + \ve{e}_5 & \alpha \ve{e}_1 +  \beta\ve{e}_2 & \alpha \ve{e}_3 + \beta \ve{e}_4
    \end{bmatrix}, \\
    \m{E}_2 &= \mu
    \begin{bmatrix}
        \ve{e}_2 + \ve{e}_6 & \ve{e}_4 + \ve{e}_5 & \ve{e}_1 & \ve{e}_3
    \end{bmatrix}.
\end{align}
Here $\ve{e}_i$ is the standard basis vector corresponding to the $i^{th}$ column of the $6\times 6$ identity matrix.
To successfully recover the desired sum over the real domain, the encoders $\m{E}_1, \m{E}_2 \in \mathbb{R}^{6 \times 4}$ and decoders $\m{D}_1, \m{D}_2 \in \mathbb{R}^{6 \times 4}$ must satisfy
% the zero-forcing and function-alignment conditions:
\begin{align}
    \m{D}_1^T \m{E}_1 &= \m{D}_1^T \m{E}_2 = \m{I}_4, \label{eq:align_rx1}\\
    \m{D}_2^T \m{E}_1 &= \m{D}_2^T \mathcal{W}(h)^{[3]} \m{E}_2 = \m{I}_4. \label{eq:align_rx2}
\end{align}
These identities yield the necessary null-space conditions:
\begin{align}
    \m{D}_1^T (\m{E}_1 - \m{E}_2) = \m{0}, \quad \m{D}_2^T \left(\m{E}_1 - \mathcal{W}(h)^{[3]} \m{E}_2\right) = \m{0}.
\end{align}
Because the total space is $6$-dimensional and the decoders project onto a $4$-dimensional subspace, the null space of each decoder has a dimension of $6 - 4 = 2$. The matrices $\m{E}_1$ and $\m{E}_2$ are designed so that the column span of $\m{E}_1 - \m{E}_2$ lies within the null space of $\m{D}_1$, and the column span of $\m{E}_1 - \mathcal{W}(h)^{[3]} \m{E}_2$ lies within the null space of $\m{D}_2$. Since these null spaces are 2-dimensional, the difference matrices $\m{E}_1 - \m{E}_2$ and $\m{E}_1 - \mathcal{W}(h)^{[3]} \m{E}_2$ must each have a column rank of at most 2.

To satisfy these rank constraints, the matrix columns are aligned directly: the first two columns of $\m{E}_1$ are identical to the first two columns of $\m{E}_2$; the last two columns of $\m{E}_1$ are equal to the matrix $\mathcal{W}(h)^{[3]}$ multiplied by the corresponding columns of $\m{E}_2$.

%Selecting linearly independent basis vectors $\ve{e}_i$ that fulfill these structural constraints yields the following encoding matrices:
%\begin{align}
%    \m{E}_1 &= \mu
%    \begin{bmatrix}
%        \ve{e}_2 + \ve{e}_6 & \ve{e}_4 + \ve{e}_5 & \alpha \ve{e}_1 +  \beta\ve{e}_2 & \alpha \ve{e}_3 + \beta \ve{e}_4
%    \end{bmatrix}, \\
%    \m{E}_2 &= \mu
%    \begin{bmatrix}
%        \ve{e}_2 + \ve{e}_6 & \ve{e}_4 + \ve{e}_5 & \ve{e}_1 & \ve{e}_3
%    \end{bmatrix}.
%\end{align}
%The encoders are then applied as $\mathcal{T}(\ve{x}_1)^{(3)} = \m{E}_1 \mathcal{T}(s_1)^{(2)}$, and $\mathcal{T}(\ve{x}_2)^{(3)} = \m{E}_2 \mathcal{T}(s_2)^{(2)}$. 

With the encoders fixed, the decoders are constructed in two steps. First, choose
$\tilde{\m{D}}_1^T \in \mathbb{R}^{4\times 6}$ and
$\tilde{\m{D}}_2^T \in \mathbb{R}^{4\times 6}$ such that
\begin{align}
    \tilde{\m{D}}_1^T(\m{E}_1-\m{E}_2) &= \m{0},\\
    \tilde{\m{D}}_2^T\bigl(\m{E}_1-\mathcal{W}(h)^{[3]}\m{E}_2\bigr) &= \m{0}.
\end{align}
That is, the rows of $\tilde{\m{D}}_1^T$ and $\tilde{\m{D}}_2^T$ are chosen from the
left null spaces of the corresponding difference matrices. By construction,
\begin{align}
    \tilde{\m{D}}_1^T \m{E}_1 &= \tilde{\m{D}}_1^T \m{E}_2,\\
    \tilde{\m{D}}_2^T \m{E}_1 &= \tilde{\m{D}}_2^T \mathcal{W}(h)^{[3]} \m{E}_2.
\end{align}
Provided these effective $4\times 4$ matrices are invertible, we normalize them by defining
\begin{align}
    \m{D}_1^T &= \bigl(\tilde{\m{D}}_1^T\m{E}_1\bigr)^{-1}\tilde{\m{D}}_1^T,\\
    \m{D}_2^T &= \bigl(\tilde{\m{D}}_2^T\m{E}_1\bigr)^{-1}\tilde{\m{D}}_2^T.
\end{align}
It then follows immediately that
\begin{align}
    \m{D}_1^T \m{E}_1 &= \m{D}_1^T \m{E}_2 = \m{I}_4,\\
    \m{D}_2^T \m{E}_1 &= \m{D}_2^T \mathcal{W}(h)^{[3]} \m{E}_2 = \m{I}_4.
\end{align}
The decoding matrices are given by:
\begin{align}
	\m{D}_1^T &= \frac{1}{\mu} \tilde{\m{D}}_1^T \triangleq \frac{1}{\mu}
    \begin{bsmallmatrix}
        -(\alpha-1) & 0 & \beta & 0 \\
        0 & -(\alpha-1) & 0 & \beta \\
        0 & 1 & 0 & 0 \\
        1 & 0 & 0 & 0
    \end{bsmallmatrix}^{-1}
    \begin{bsmallmatrix}
       \beta \ve{e}_1^T - (\alpha-1) \ve{e}_2^T \\
       \beta \ve{e}_3^T - (\alpha-1) \ve{e}_4^T \\
        \ve{e}_5^T \\
        \ve{e}_6^T
    \end{bsmallmatrix},\\
    \m{D}_2^T &= \frac{1}{\mu} \tilde{\m{D}}_2^T \triangleq \frac{1}{\mu} 
    \begin{bsmallmatrix}
        -\beta & 0 & \Delta & 0 \\
        0 & -\beta & 0 & \Delta \\
        0 & 0 & -\beta & \alpha \\
        0 & 0 & \alpha & \beta
    \end{bsmallmatrix}^{-1}  
    \begin{bsmallmatrix}
        (1-\alpha)\ve{e}_1^T -\beta \ve{e}_2^T \\
        (1-\alpha)\ve{e}_3^T - \beta \ve{e}_4^T \\
        -\ve{e}_2^T + \ve{e}_3^T + \ve{e}_6^T\\
        \ve{e}_1^T + \ve{e}_4^T - \ve{e}_5^T
    \end{bsmallmatrix},
\end{align}
where $\Delta \triangleq (1-\alpha)\alpha-\beta^2$. The existence of the inverses in $\m{D}_1$ and $\m{D}_2$ is guaranteed by the assumption that the channel ratio is strictly complex, ensuring $\beta \neq 0$. The scaling parameter $\mu$ is chosen  so that the transmit power constraint is satisfied. Note that $\mu$ depends only on the channels and not on $\sigma^2$. Since the noise is scaled by the decoding operations (that depend only on the channel), the MSE is bounded by $c\sigma^2=c/\rho$ for some constant $c$, i.e.,  $c$ that depends on the channels, but not on $\sigma^2$. Thus $\mathrm{MSE}(\mathfrak{C}, \mathscr{H})=O(1/\rho)$. Since $K=2$ desired computations  are achieved over $N=3$ channel uses, the AirComp DoF achieved by this coding scheme is $2/3$.

\subsection{Proof of Converse}

Consider a fixed, deterministic channel realization $\mathscr{H}$ satisfying the hypothesis of Theorem \ref{thm:1}, namely that for all channel uses $\nu$, the ratio $h(\nu) \in \mathbb{C} \setminus \mathbb{R}$. Suppose that for each SNR $\rho$, there exists a computation code $\mathfrak{C}_{K,N}(\rho)$ achieving $\mathrm{MSE}(\mathfrak{C}_{K,N}(\rho),\mathscr{H}) = O(1/\rho)$. The code and the specific channel realization are assumed to be known globally by all transmitters and receivers.

Following the MSE-equivocation bounding technique established in steps \eqref{eq:mac_1}--\eqref{eq:mac_4} of the proof of Lemma \ref{lemma:SISOMAC} in Appendix \ref{appendix_sisomac}, the mutual information between the desired function and the received signal at each receiver is bounded below by:
\begin{align} 
    I(f^K;\ve{y}_i^N) &\ge  K \log\rho + o(\log \rho), \quad i \in \{1,2\}. \label{eq:conv_lower_bound}
\end{align}
Summing the bounds for both receivers and expanding the mutual information terms, we obtain:
\begin{align}
    2K \log\rho &+ o(\log \rho) \notag\\
    &\le I(f^K;\ve{y}_1^N) + I(f^K;\ve{y}_2^N) \label{eq:conv_sum_I} \\
    &= h(\ve{y}_1^N) + h(\ve{y}_2^N) - h(\ve{y}_1^N| f^K) - h(\ve{y}_2^N| f^K). \label{eq:conv_sum_h}
\end{align}
To bound the unconditional entropy terms in \eqref{eq:conv_sum_h}, we apply the power bound derived in Appendix~\ref{appendix_4}:
\begin{align} \label{eq:conv_power_bound}
    h(\ve{y}_1^N) + h(\ve{y}_2^N) \le 2N \log \left(2M_o^2 + \frac{1}{\rho} \right) + o(\log \rho).
\end{align}
For the remaining conditional entropy terms in \eqref{eq:conv_sum_h}, we apply the chain rule and condition on $s_1^K$:
\begin{align}
    &- h(\ve{y}_1^N| f^K) - h(\ve{y}_2^N| f^K) 
    \le -h(\ve{y}_1^N,\ve{y}_2^N| f^K)  \\
    &= - I(\ve{y}_1^N, \ve{y}_2^N; s_1^K| f^K) - h(\ve{y}_1^N,\ve{y}_2^N|s_1^K, f^K) \\
    &= h(s_1^K|\ve{y}_1^N,\ve{y}_2^N, f^K) - h(s_1^K|f^K) - h(\ve{z}_1^N, \ve{z}_2^N). \label{eq:conv_cond_expansion}
\end{align}
Note that the joint noise entropy is $h(\ve{z}_1^N, \ve{z}_2^N) = -2N \log \rho + o(\log \rho)$, and  $h(s_1^K|f^K) = K h(s_1|f) = o(\log \rho)$. Substituting these into \eqref{eq:conv_cond_expansion} yields:
\begin{align}
    &- h(\ve{y}_1^N| f^K) - h(\ve{y}_2^N| f^K) \notag\\
    &\le h(s_1^K|\ve{y}_1^N,\ve{y}_2^N, f^K) + 2 N \log \rho + o(\log \rho). \label{eq:rembound}
\end{align}

Next, we bound the term $h(s_1^K|\ve{y}_1^N,\ve{y}_2^N, f^K)$. Because the channel ratio satisfies $h(\nu) \in\mathbb{C}\setminus\mathbb{R}$, we strictly have $\m{H}_{11}(\nu)\m{H}_{22}(\nu) \neq \m{H}_{12}(\nu)\m{H}_{21}(\nu)$. This guarantees that the block channel matrix $\m{H}(\nu)$ has full rank and is invertible. Applying the Invertibility Bound (Lemma \ref{lm:invert} in Appendix \ref{app:invert}), we get:
\begin{align}
    h(s_1^K|\ve{y}_1^N,\ve{y}_2^N, f^K) \le -K \log \rho + o(\log \rho). \label{eq:invert_bound_app}
\end{align}
Substituting \eqref{eq:invert_bound_app} into \eqref{eq:rembound} gives:
\begin{align} \label{eq:conv_4.5.1}
    - h(\ve{y}_1^N|f^K) - h(\ve{y}_2^N|f^K) \le (2N - K) \log \rho + o(\log \rho).
\end{align}
Finally, substituting the bounds from \eqref{eq:conv_power_bound} and \eqref{eq:conv_4.5.1} back into \eqref{eq:conv_sum_I}--\eqref{eq:conv_sum_h}, we obtain,
\begin{align}
    &2K \log\rho + o(\log \rho) \notag\\
    &\le 2N \log \left(2M_o^2 + \frac{1}{\rho} \right) + (2N - K) \log \rho + o(\log \rho). \label{eq:conv_final_ineq}
\end{align}
Dividing both sides by $N\log\rho$ and taking the limit as $\rho \to \infty$, the lower-order  terms vanish, and we obtain $ \frac{K}{N} \le \frac{2}{3}$,
which completes the proof of the converse.

\section{MIMO Generic Network} 
When an $(M_1,M_2;N_1,N_2;\mathscr{H})$ MIMO Gaussian network is operated over channel uses $\nu=1,\ldots,N$, the corresponding input–output relations can be written as,
\begin{align}
%    \ve{y}_1^{(N)} &= \m{H}_{11}^{[N]}\ve{x}_1^{(N)} + \m{H}_{12}^{[N]}\ve{x}_2^{(N)} + \ve{z}_1^{(N)}, \\
%    \ve{y}_2^{(N)} &= \m{H}_{21}^{[N]}\ve{x}_1^{(N)} + \m{H}_{22}^{[N]}\ve{x}_2^{(N)} + \ve{z}_2^{(N)},
    \ve{y}_i^{(N)} &= \m{H}_{i1}^{[N]}\ve{x}_1^{(N)} + \m{H}_{i2}^{[N]}\ve{x}_2^{(N)} + \ve{z}_i^{(N)}, 
\end{align}
where  $\ve{y}_i^{(N)}\in\mathbb{C}^{N N_i}, \ve{x}_i^{(N)}\in\mathbb{C}^{N M_i}$ for $i\in\{1,2\}$.
We will follow this notation in this proof.

\subsection{Proof of Achievability}\label{proof:MIMOach}
Let us begin with the idea of subnetwork packing.
\begin{definition}[Subnetwork Packing] \label{def:filled}
We say that an $(M_1,M_2;N_1,N_2)$ network can be \emph{filled} with $k(t)$ copies of a subnetwork of type $(m_1(t),m_2(t);n_1(t),n_2(t))$ for $t\in[1:T]$ if the antenna allocations satisfy the component-wise inequality:
\begin{align}
\sum_{t=1}^{T} & k(t)
\begin{bmatrix}
m_1(t) & m_2(t) & n_1(t) & n_2(t)
\end{bmatrix} \notag\\
&\le
\begin{bmatrix}
M_1 & M_2 & N_1 & N_2
\end{bmatrix}.\label{eq:packingconstraint}
\end{align}
\end{definition}
The following lemma establishes an AirComp DoF achievability result for any valid subnetwork composition.

\begin{lemma}[Achievability via Packing] \label{lem:achievability_packing}
Suppose a generic network can be filled with $a$ copies of a $(1,1;1,1)$ subnetwork, $b$ copies of a $(1,1;2,1)$ subnetwork, and $c$ copies of a $(1,1;1,2)$ subnetwork. Then, an overall AirComp DoF of $\frac{2}{3}a + b + c$ is achievable almost surely.
\end{lemma}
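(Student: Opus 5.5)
The plan is to build the global scheme by operating the packed subnetworks in parallel on disjoint antenna sets, over a common block of channel uses. The cross-subnetwork leakage is removed by zero-forcing precoding at the transmitters and interference suppression at the receivers. Each subnetwork then runs its own code: the Theorem~\ref{thm:1} scheme for a $(1,1;1,1)$ copy, and a one-shot scheme for the $(1,1;2,1)$ and $(1,1;1,2)$ copies. Concretely, I take $N=3$ channel uses. Each $(1,1;1,1)$ copy delivers $K=2$ sums, and each of the other two types delivers $3$ sums, i.e.\ one per channel use. Summing over copies, with independent source symbols assigned to each copy, gives DoF $\tfrac{2}{3}a+b+c$.

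\textbf{Building blocks.} For a $(1,1;2,1)$ subnetwork I use a one-shot scheme. Tx-$1$ sends $\mu g_1 s_1$ and Tx-$2$ sends $\mu g_2 s_2$. Rx-$2$ has one antenna and sees $\mu(h_{21}g_1 s_1+h_{22}g_2 s_2)$. Choosing $g_1=1/h_{21}$ and $g_2=1/h_{22}$ makes Rx-$2$ see $\mu(s_1+s_2)$ plus noise. Rx-$1$ has two antennas and sees $\mu(\ve{h}_{11}g_1 s_1+\ve{h}_{12}g_2 s_2)$. It applies a row vector $\ve{u}$ with $\ve{u}\ve{h}_{11}g_1=\ve{u}\ve{h}_{12}g_2=1$, which is solvable whenever $[\ve{h}_{11}g_1,\ve{h}_{12}g_2]$ is invertible; this holds almost surely for generic channels. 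The $(1,1;1,2)$ case is symmetric. In both cases the MSE is $O(\sigma^2)$ because all scalings are independent of $\sigma^2$.

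\textbf{Decoupling.} Index the subnetworks by $t$, and let $T_t$ and $R_t$ be the transmit and receive antenna groups assigned to subnetwork $t$ under \eqref{eq:packingconstraint}. Without loss of generality the inequality is tight; otherwise the spare antennas are left silent, or added to arbitrary copies and ignored. Each copy's transmit signal at Tx-$j$ is sent along a beamforming vector $\ve{v}_{j,t}\in\mathbb{C}^{M_j}$. Each receive group of copy $t$ at Rx-$i$ is a $\mathbb{C}^{n_i(t)\times N_i}$ combiner $\m{U}_{i,t}$. I want $\m{U}_{i,t}\m{H}_{ij}\ve{v}_{j,t'}=0$ for $t\neq t'$, and the effective channels $\m{U}_{i,t}\m{H}_{ij}\ve{v}_{j,t}$ to be generic, i.e.\ to satisfy the nondegeneracy conditions of the building blocks. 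The simplest construction does not need joint zero-forcing. It uses the fact that \eqref{eq:packingconstraint} gives $\sum_t m_j(t)\le M_j$ and $\sum_t n_i(t)\le N_i$, so the transmit streams can be spread over the antennas. Let Tx-$j$ use the full matrix $\m{V}_j=[\ve{v}_{j,t}]_t$ with $\sum_t m_j(t)$ columns, and let each receiver first zero-force. Rx-$i$ sees the stacked MAC $[\m{H}_{i1}\m{V}_1,\m{H}_{i2}\m{V}_2]$, which has $\sum_t(m_1(t)+m_2(t))$ columns. That can exceed $N_i$, so plain receiver zero-forcing fails in general. Instead I require the combiner at Rx-$i$ for copy $t$ to null the streams of all other copies $t'\neq t$. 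These interfering streams number $\sum_{t'\neq t}2$, while Rx-$i$ has only $N_i$ dimensions. This exposes the key issue: interference must be handled jointly by transmit and receive beamforming, not by the receiver alone.

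\textbf{Main obstacle.} The hard step is showing that a block-diagonalizing transmit/receive beamformer pair exists almost surely. It must make every cross term $\m{U}_{i,t}\m{H}_{ij}^{[3]}\m{V}_{j,t'}$, $t\neq t'$, vanish, while keeping the diagonal effective channels generic. This is a two-user MIMO "interference-channel-like" alignment feasibility problem with antenna budgets that are only guaranteed by \eqref{eq:packingconstraint}. My approach has two parts. First, exhibit one channel realization, for example block-diagonal $\m{H}_{ij}$ matched to the packing, for which the trivial beamformers achieve exact decoupling and nondegenerate effective channels. Second, argue by a dimension count/algebraic-geometry argument that the alignment equations remain solvable on a Zariski-open set of channels. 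Here the count compares bilinear equations against variables on Grassmannians, in the spirit of properness conditions. The determinant conditions needed for the sub-schemes are nonzero polynomials in the channel entries that do not vanish at the exhibited point. Hence they hold almost surely, and the whole scheme attains MSE $O(1/\rho)$ with DoF $\tfrac{2}{3}a+b+c$. If the generic-solvability argument proves delicate, my fallback is to use the time-varying (symbol-extended) structure over the $3$ channel uses to create extra dimensions for zero-forcing.
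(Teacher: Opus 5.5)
Your reduction rests on a step that is neither proved nor true in general. The step is that the generic channel can be exactly block-diagonalized into the packed subnetworks, after which each copy runs its stand-alone scheme (Theorem~\ref{thm:1} on an effective scalar SISO channel, the one-shot scheme on the other types). Your justification is to exhibit one block-diagonal channel on which trivial beamformers decouple the copies, and then argue that the zero-forcing/alignment equations remain solvable on a Zariski-open set. That argument is invalid: solvability of a polynomial system is not an open condition, so a solution at one non-generic point says nothing about nearby channels.

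Concretely, take the $(3,3;4,4)$ network, where DoF $8/3$ forces the tight packing $a=b=c=1$. At Rx-$1$, the $(1,1;2,1)$ copy needs a rank-$2$ combiner that annihilates the four streams of the other two copies, so those four streams lie in a $2$-dimensional subspace. The rank-$1$ combiners of the other two copies then force each of those copies to collapse to a single direction, and the same holds at Rx-$2$. Hence the SISO copy needs $\m{H}_{11}\ve{v}_1\parallel\m{H}_{12}\ve{v}_2$ and $\m{H}_{21}\ve{v}_1\parallel\m{H}_{22}\ve{v}_2$ with $\ve{v}_1,\ve{v}_2\in\mathbb{C}^3$. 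That is six equations in four projective unknowns, which generically have no solution. So per-channel-use decoupling fails, and properness is violated rather than satisfied. Your symbol-extension fallback is not worked out. Even if the copies could be separated over the $3$-use extension, each copy would become a multi-dimensional block rather than a scalar SISO channel, so Theorem~\ref{thm:1} would not apply to it as is.

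The missing idea, which is how the paper proceeds, is to skip decoupling entirely. Instead, arrange that every equality constraint is linear and solvable for all channels, so genericity is only needed for an open condition. Over $N=3$ uses with $K=2a+3b+3c$, fix precoders $\m{U}_1,\m{U}_2$ and pick combiners with $\bigl[\m{V}_1^\dagger,\,-\m{V}_2^\dagger\bigr]$ in the left null space of $\m{H}_U$. This space has dimension at least $3(N_1+N_2)-2K\ge K$, because the packing gives $N_1+N_2\ge K$. The choice enforces $\tilde{\m{H}}_{11}=\tilde{\m{H}}_{21}$ and $\tilde{\m{H}}_{12}=\tilde{\m{H}}_{22}$ for every channel.

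What remains is $\det\tilde{\m{H}}_{ji}\neq 0$, a rational function of the channel entries. That is where your block-diagonal realization belongs: it is the single point certifying that these determinants are not identically zero. The transmitters then precode with $\tilde{\m{H}}_i^{-1}$, so both receivers see the same $K$ parallel scalar MACs, each handled by Lemma~\ref{lemma:SISOMAC}. The $(1,1;1,1)$ copies never have to be isolated as physical SISO networks.
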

The proof of Lemma \ref{lem:achievability_packing} is provided in Appendix~\ref{appendix_sisomacchievability}. Having established that the quantity $\frac{2}{3}a + b + c$ represents an achievable AirComp DoF, our next step is to prove that we can always find a non-negative integer allocation $(a,b,c)$ that satisfies the packing constraint \eqref{eq:packingconstraint} while reaching the maximum AirComp DoF of the network. We begin with the regime $M_1 + M_2 \le N_1 + N_2$, i.e., the \emph{receive-antenna-abundant} regime.

\begin{lemma}[Subnetwork Allocation] \label{lem:packing_rx_heavy}
For an $(M_1,M_2;N_1,N_2)$ network satisfying $M_1 + M_2 \leq N_1 + N_2$, there exist non-negative integers $a, b, c$ such that the network can be filled with $a$ copies of a $(1,1;1,1)$ subnetwork, $b$ copies of a $(1,1;2,1)$ subnetwork, and $c$ copies of a $(1,1;1,2)$ subnetwork, and
%\begin{align}
	$\frac{2}{3}a + b + c = \min\left\{M_1, M_2, N_1, N_2, \frac{N_1 + N_2}{3}\right\}$.
%\end{align}
\end{lemma}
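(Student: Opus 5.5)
Throughout, the packing constraint \eqref{eq:packingconstraint} for $a$ copies of $(1,1;1,1)$, $b$ copies of $(1,1;2,1)$ and $c$ copies of $(1,1;1,2)$ reads
\begin{align*}
a+b+c\le m\triangleq\min\{M_1,M_2\},\quad a+2b+c\le N_1,\quad a+b+2c\le N_2.
\end{align*}
The plan is a direct construction of $(a,b,c)$, split according to which term attains the minimum $D\triangleq\min\{M_1,M_2,N_1,N_2,\tfrac{N_1+N_2}{3}\}$.

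By the symmetry $N_1\leftrightarrow N_2$, $b\leftrightarrow c$, assume without loss of generality that $N_1\ge N_2$. Then $D=N_1<N_2$ is impossible. Moreover, if $D=N_1=N_2$, then $N_2\le 2N_2/3$ forces $N_2=0$, which falls under Case 2 below. This leaves three cases.

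\emph{Case 1: $D=m$, i.e., $m\le N_2$ and $3m\le N_1+N_2$.} Take $a=0$ and $b+c=m$. The receive constraints become $b\le N_1-m$ and $c=m-b\le N_2-m$. So we need some integer $b$ in the interval
\begin{align*}
\bigl[\max\{0,2m-N_2\},\ \min\{m,N_1-m\}\bigr].
\end{align*}
This interval is nonempty for the following reasons:
\begin{itemize}
\item $0\le m$ and $0\le N_1-m$, since $m\le N_2\le N_1$;
\item $2m-N_2\le m$, since $m\le N_2$;
\item $2m-N_2\le N_1-m$, since $3m\le N_1+N_2$.
\end{itemize}
This choice achieves the value $b+c=m$.

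\emph{Case 2: $D=N_2$, i.e., $N_2\le m$ and $N_1\ge 2N_2$.} Take $a=c=0$ and $b=N_2$. The constraints hold because $b\le m$, $2b\le N_1$, and $b\le N_2$, and the achieved value is $N_2$.

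\emph{Case 3: $D=\tfrac{N_1+N_2}{3}$ with $D<m$ and $D<N_2$.} This is the step needing care, because the value $\tfrac23 a+b+c$ must hit a possibly fractional target exactly.
\begin{itemize}
\item Write $N_1+N_2=3q+r$ with $r\in\{0,1,2\}$.
\item Choose $a\in\{0,1,2\}$ with $2a\equiv r \pmod 3$, namely $a=0,2,1$ for $r=0,1,2$.
\item Set $s\triangleq D-\tfrac{2a}{3}$. This is an integer, and $2a+3s=N_1+N_2$.
\item Set $b=N_1-a-s$ and $c=N_2-a-s$. Then $b+c=N_1+N_2-2a-2s=s$, and both receive constraints hold with equality.
\end{itemize}
It remains to check nonnegativity and the transmit constraint.
\begin{itemize}
\item Since $c$ is an integer and $c=N_2-D-\tfrac{a}{3}>-\tfrac{a}{3}\ge-\tfrac23$ (using $N_2>D$), we get $c\ge 0$.
\item Since $N_1\ge N_2$, we get $b\ge c\ge 0$, and hence $s=b+c\ge 0$.
\item The transmit constraint $a+b+c=s+a\le m$ follows from $m>D=s+\tfrac{2a}{3}$ together with integrality of $m$: for $a=1$ this gives $m\ge s+1$, and for $a=2$ it gives $m\ge s+2$.
\end{itemize}
The achieved value is then $\tfrac23 a+s=D$.

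I expect Case 3 to be the main obstacle. The residue bookkeeping that makes $\tfrac23 a+b+c$ equal $D$ exactly, with $a\le 2$, is the only nontrivial point; the integrality arguments in the last two bullets close it. Combining the three cases gives the lemma.
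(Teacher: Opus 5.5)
Your proof is correct and follows essentially the same construction as the paper: the same case split on which term attains the minimum, with $a=0$ and $b+c=\min\{M_1,M_2\}$ in the transmitter-bottleneck case (the paper takes the left endpoint $b=\max\{0,2M_1-N_2\}$ of your interval), copies of a single asymmetric subnetwork type in the receiver-bottleneck case, and $a=3t-(N_1+N_2)$ with $t=\lceil (N_1+N_2)/3\rceil$ in the sum-bottleneck case. You also supply the integrality checks that the paper leaves to ``direct substitution,'' and your labels ($b=N_1-t$, $c=N_2-t$, and $b=N_2$ when $N_2$ is the bottleneck) are the ones consistent with the packing constraint, since each $(1,1;2,1)$ copy uses two antennas at Rx-$1$, whereas the paper's printed Cases 1 and 2 interchange $b$ and $c$.
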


\begin{proof}
We explicitly provide the integer allocations $(a, b, c)$ that satisfy Definition \ref{def:filled} while achieving the target sum, in each of the following cases.\\ 
\noindent \textbf{Case 1:} The total number of receive antennas acts as the bottleneck, i.e., $\min\left\{M_1, M_2, N_1, N_2, \frac{N_1 + N_2}{3}\right\} = \frac{N_1 + N_2}{3}$. We allocate the parameters as $a = 3\left\lceil \frac{N_1+N_2}{3}\right\rceil - (N_1+N_2)$, $b = N_2 - \left\lceil \frac{N_1+N_2}{3}\right\rceil$, $c= N_1 - \left\lceil \frac{N_1+N_2}{3}\right\rceil$.

%follows:
%\begin{align}
%	a &= 3\left\lceil \frac{N_1+N_2}{3}\right\rceil - (N_1+N_2), \\
%	b &= N_2 - \left\lceil \frac{N_1+N_2}{3}\right\rceil, \\
%	c &= N_1 - \left\lceil \frac{N_1+N_2}{3}\right\rceil.
%\end{align}

\noindent \textbf{Case 2:} One specific receiver acts as the bottleneck, e.g., $\min\left\{M_1, M_2, N_1, N_2, \frac{N_1 + N_2}{3}\right\} = N_1$ (the case for $N_2$ follows by symmetry). We set $a=0, b=N_1, c=0$.
%:
%\begin{align}
%	a &= 0, \\
%    b &= N_1, \\
%    c &= 0.
%\end{align}

\noindent \textbf{Case 3:} One specific transmitter acts as the bottleneck, e.g., $\min\left\{M_1, M_2, N_1, N_2, \frac{N_1 + N_2}{3}\right\} = M_1$ (the case for $M_2$ follows by symmetry). We set $a=0,b = \max \{0, 2M_1 - N_2\}, c= M_1 - b$.
%\begin{align}
%	a &= 0, \\
%	b &= \max \{0, 2M_1 - N_2\}, \\
%	c &= M_1 - b.
%\end{align}
In all three cases, direct substitution verifies that the subnetwork packing constraints are satisfied, and the desired metric $\frac{2}{3}a + b + c$ is attained.
\end{proof}

The preceding lemmas jointly establish the achievability for the receive-antenna-abundant regime. We now turn to the complementary transmit-antenna-abundant regime: $M_1+M_2 > N_1+N_2$.
 
By exploiting the structural symmetry of the network (effectively reversing the roles of transmitters and receivers), we can apply a dual packing strategy. In this regime, the network is filled with $a$ copies of a $(1,1;1,1)$ subnetwork, $b$ copies of a $(2,1;1,1)$ subnetwork, and $c$ copies of a $(1,2;1,1)$ subnetwork. Following a constructive logic parallel to Lemmas \ref{lem:achievability_packing} and \ref{lem:packing_rx_heavy}, this dual packing achieves an AirComp DoF of
%\begin{align}
    $\frac{2}{3}a + b + c = \min\left\{M_1, M_2, N_1, N_2, \frac{M_1 + M_2}{3}\right\}$.
%\end{align}
Taking the maximum achievable AirComp DoF across both regimes comprehensively completes the proof of achievability for Theorem \ref{thm:4}.

\subsection{Proof of Converse}\label{proof:MIMOconverse}
We first derive the cooperative upper bound $\mathrm{ACDoF}(\mathcal{H}) \overset{\mathrm{a.s.}}{\le} \frac{1}{3}\max\{M_1 + M_2, N_1 + N_2\}$. Before proceeding to the formal proof, we provide a brief intuition, which does not distinguish spatial DoF and AirComp DoF. Consider the receive-antenna-abundant regime where $N_1 + N_2 \ge M_1 + M_2$. In this case, the overall stacked channel matrix from all transmit antennas to all receive antennas has full column rank. Consequently, the collective received signals provide an observation space of at most $N(N_1+N_2)$ DoF. These available dimensions must support the signals injected into the network. First, successful computation requires that both receivers decode the desired function $f^K = s_1^K + s_2^K$, which consumes $K$ DoF at each receiver, for a total of $2K$ DoF. Second, because the overall channel mapping is injective, the collective received signals must contain enough information to fully invert the channel and resolve the individual transmit messages $s_1^K$ and $s_2^K$. Since the sum $s_1^K + s_2^K$ is already obtained, fully recovering $s_1^K$ and $s_2^K$ requires resolving at least one additional, linearly independent combination of the messages. This residual independent combination consumes an additional $K$ DoF. Therefore, the system must support a total of $3K$ DoF, leading to the fundamental constraint $3K \le N(N_1+N_2)$.

We now formalize this argument. Consider a generic channel realization $\mathscr{H}$ satisfying $N_1 + N_2 \ge M_1 + M_2$. Suppose $(K,N)$ computation is feasible on $\mathscr{H}$, i.e., for each SNR $\rho$, there exists a computation code $\mathfrak{C}_{K,N}(\rho)$ achieving $\mathrm{MSE}(\mathfrak{C}_{K,N}(\rho),\mathscr{H}) = O(1/\rho)$. 
As a generic realization, almost surely, the block channel matrix $\m{H}^{[N]}$ has full column rank.
 The code and channel are assumed to be known globally.

Applying the MSE-equivocation bound (see Appendix~\ref{appendix_mse}), the mutual information at Rx-$1$ is bounded by:
\begin{align} \label{eq:mimo_conv_I1}
    I(f^K;\ve{y}_1^N) &= h(f^K) - h(f^K|\ve{y}_1^N) \notag\\
    &\ge h(f^K) + K \log\rho +  o(\log \rho) \notag\\
    &= K \log\rho +  o(\log \rho).
\end{align}
Similarly, for Rx-$2$ we have:
\begin{align} \label{eq:mimo_conv_I2}
    I(f^K;\ve{y}_2^N) \ge  K \log\rho +  o(\log \rho).
\end{align}
Summing \eqref{eq:mimo_conv_I1} and \eqref{eq:mimo_conv_I2} yields:
\begin{align}
    2K \log\rho &+ o(\log \rho) \notag \\
    &\le I(f^K;\ve{y}_1^N) + I(f^K;\ve{y}_2^N) \\
    &= h(\ve{y}_1^N) + h(\ve{y}_2^N) - h(\ve{y}_1^N|f^K) - h(\ve{y}_2^N|f^K). \label{eq:mimo_conv_sum}
\end{align}
From the power bound derived in Appendix~\ref{appendix_4}, 
\begin{align} \label{eq:mimo_conv_power}
    &h(\ve{y}_1^N) + h(\ve{y}_2^N) \notag \\
    &\le N (N_1 + N_2) \log \left((M_1+M_2)M_o^2 + 1/\rho\right) + o(\log \rho).
\end{align}
To bound the remaining terms in \eqref{eq:mimo_conv_sum}, we apply the chain rule and condition on the interference $s_1^K$:
\begin{align}
    &- h(\ve{y}_1^N|f^K) - h(\ve{y}_2^N|f^K) \notag\\
    &\le -h(\ve{y}_1^N,\ve{y}_2^N|f^K)  \\
    &= - I(\ve{y}_1^N, \ve{y}_2^N; s_1^K| f^K)  - h(\ve{y}_1^N,\ve{y}_2^N|s_1^K, f^K)  \\
    &= h(s_1^K|\ve{y}_1^N,\ve{y}_2^N, f^K) - h(s_1^K|f^K)  - h(\ve{z}_1^N, \ve{z}_2^N)  \\
    &= h(s_1^K|\ve{y}_1^N,\ve{y}_2^N, f^K) - h(s_1^K|f^K) \notag \\
    &\quad + N(N_1 + N_2) \log \rho + o(\log \rho). \label{eq:mimo_conv_cond}
\end{align}
As established previously, the function equivocation is $h(s_1^K|f^K) = o(\log \rho)$. Furthermore, because $N_1 + N_2 \ge M_1 + M_2$, the stacked channel matrix $\m{H}^{[N]}$ has full column rank almost surely. This allows us to apply the invertibility bound (Lemma \ref{lm:invert}):
\begin{align} \label{eq:mimo_conv_invert}
    h(s_1^K|\ve{y}_1^N,\ve{y}_2^N, f^K) \le -K \log \rho + o(\log \rho).
\end{align}
Substituting \eqref{eq:mimo_conv_invert} into \eqref{eq:mimo_conv_cond} gives:
\begin{align} \label{eq:mimo_conv_cond_final}
    - h(\ve{y}_1^N|f^K) &- h(\ve{y}_2^N|f^K) \notag \\
    &\le [N(N_1 + N_2) - K] \log \rho + o(\log \rho).
\end{align}
Finally, substituting the bounds from \eqref{eq:mimo_conv_power} and \eqref{eq:mimo_conv_cond_final} back into \eqref{eq:mimo_conv_sum}, we obtain the unified inequality:
\begin{align}
    2K \log\rho &+ o(\log \rho) \notag \\
    &\le N (N_1 + N_2) \log \left((M_1+M_2)M_o^2 + 1/\rho\right) \notag \\
    &\quad + [N(N_1 + N_2) - K] \log \rho + o(\log \rho). \label{eq:mimo_conv_final_ineq}
\end{align}
Dividing both sides by $N \log\rho$ and taking the limit as $\rho \to \infty$, we obtain
%\begin{align}
    $\frac{K}{N} \le \frac{N_1 + N_2}{3}$,
%\end{align}
which proves that when $N_1 + N_2 \ge M_1 + M_2$,
\begin{align} \label{eq:ub1}
    \mathrm{ACDoF}(\mathcal{H}) \overset{\mathrm{a.s.}}{\le} \frac{1}{3} ( N_1 + N_2 ).
\end{align}

To address the complementary transmit-antenna-abundant regime where $N_1+N_2 < M_1+M_2$, we employ an antenna-augmentation argument. In this regime, the generic channel matrix $\m{H}(\nu) \in \mathbb{C}^{(N_1+N_2) \times (M_1+M_2)}$ almost surely has full row rank. We can augment the network by introducing $M_1+M_2-(N_1+N_2)$ virtual receive antennas (each with independent Gaussian noise), appending linearly independent rows to $\m{H}(\nu)$ to form a square, full-rank matrix $\m{H}^\prime(\nu) \in \mathbb{C}^{(M_1+M_2) \times (M_1+M_2)}$. Since providing additional observations to the receivers cannot decrease the achievable AirComp DoF, the computation capacity of the original network is strictly upper-bounded by that of the augmented network. Because the augmented network satisfies the full-column-rank condition, we can invoke the result from \eqref{eq:ub1} to conclude:
\begin{align} \label{eq:ub2}
    \mathrm{ACDoF}(\mathcal{H}) \overset{\mathrm{a.s.}}{\le} \frac{1}{3} (M_1 + M_2),
\end{align}
when $N_1 + N_2 < M_1 + M_2$.
%Combining \eqref{eq:ub1} and \eqref{eq:ub2} establishes the cooperative upper bound:
%\begin{align} \label{eq:c1}
%    \mathrm{ACDoF}(\mathcal{H}) \overset{\mathrm{a.s.}}{\le} \frac{1}{3}\max\{M_1 + M_2, N_1 + N_2\}.
%\end{align}

%Furthermore, the AirComp DoF is fundamentally limited by the single-user multiple-antenna point-to-point DoF bounds. 
Applying a standard cut-set argument to the four individual links (Tx-$1$ to Rx-$1$, Tx-$1$ to Rx-$2$, Tx-$2$ to Rx-$1$, and Tx-$2$ to Rx-$2$) yields:
\begin{align} \label{eq:c2}
    \mathrm{ACDoF}(\mathcal{H}) \overset{\mathrm{a.s.}}{\le} \min\{M_1, M_2, N_1, N_2\}.
\end{align}
The cooperative bounds \eqref{eq:ub1} and \eqref{eq:ub2} together with the individual link bound \eqref{eq:c2}  
%yields the desired upper bound:
%\begin{align}
 %   \mathrm{ACDoF}(\mathcal{H}) \overset{\mathrm{a.s.}}{\le} \min \left\{M_1,M_2,N_1,N_2,\frac{1}{3}\max\{M_1 + M_2,N_1 + N_2\}\right\},
%\end{align}
%which 
complete the proof of the converse for Theorem \ref{thm:4}.

\section{MSE at Finite SNR: Alternating Optimization}\label{sec:numerical}

While the preceding sections establish the asymptotic AirComp DoF behavior (i.e., $\mathrm{MSE} = O(1/\rho)$ as $\rho \to \infty$) for the Gaussian network, we now investigate the finite-SNR performance. Let us focus on a time-invariant deterministic $(M,M; M,M; \mathscr{H})$ MIMO network. To translate the theoretical AirComp DoF limits into a practical scheme, we propose an iterative alternating optimization (AO) algorithm inspired by transceiver designs for interference channels \cite{gomadamDistributedNumericalApproach2011} and federated learning \cite{huRISAssistedOvertheAirFederated2022}. 

Let $L$ denote the number of simultaneous AirComp tasks per transmission. We restrict our attention to linear encoding matrices $\m{E}_i \in \mathbb{C}^{M_i \times L}$ and linear decoding matrices $\m{D}_j \in \mathbb{C}^{N_j \times L}$ for $i,j \in \{1,2\}$. We assume the data symbols $\ve{s}_i \in \mathbb{C}^L$ are drawn independently with zero mean and normalized covariance $\E{\ve{s}_i \ve{s}_i^\dagger} = \sigma_s^2 \m{I}_L$. Without loss of generality, we normalize the symbol variance to $\sigma_s^2 = \tfrac{1}{3}$ and the transmit power budget to $1$, such that the power constraint at each transmitter is $\E{\|\ve{x}_i\|^2} = \|\m{E}_i\|_F^2 \le 1$.

\subsection{MSE Objective}

Recall the formal MSE metric (sum MSE) defined in Section \ref{sec:problem_formulation}. Under our proposed linear precoding and decoding framework, the network sum-MSE can be explicitly parameterized and expanded as a function of the spatial matrices $\{\m{E}_i\}$ and $\{\m{D}_j\}$:
\begin{align} \label{eq:MSE_expanded}
\mathrm{sum-MSE}&(\{\m{E}_i\}, \{\m{D}_j\}) \notag \\
&= \sum_{j=1}^2 \mathbb{E} \left\| \hat{\ve{f}}_j - (\ve{s}_1 + \ve{s}_2) \right\|^2  \\
&= \sum_{j=1}^2 \mathbb{E} \left\| (\m{D}_j^\dagger \m{H}_{j1}\m{E}_1 - \m{I}_L)\ve{s}_1 \right.  \notag\\
&\quad \left. + (\m{D}_j^\dagger \m{H}_{j2}\m{E}_2 - \m{I}_L)\ve{s}_2 + \m{D}_j^\dagger \ve{z}_j \right\|^2  \\
&= \sigma_s^2 \sum_{j=1}^2 \sum_{i=1}^2 \left\| \m{D}_j^\dagger \m{H}_{ji}\m{E}_i - \m{I}_L \right\|_F^2 + \sigma^2 \sum_{j=1}^2 \|\m{D}_j\|_F^2. 
\end{align}
%Because this algebraic objective is jointly non-convex with respect to all variables but bi-convex in each block, 
We alternate between optimizing the encoders and the decoders.

\subsection{Encoder Optimization (Fixed Decoders)}
For fixed decoders $\{\m{D}_1, \m{D}_2\}$, the sum-MSE objective decouples into two independent quadratically constrained quadratic programs (QCQPs), one for each transmitter. The subproblem for Tx-$1$ is:
\begin{subequations} \label{eq:p1}
\begin{align}
    \min_{\m{E}_1} \quad & \sum_{j=1}^2 \left\|\m{D}_j^\dagger \m{H}_{j1}\m{E}_1 - \m{I}_L\right\|_F^2 \tag{$P_1$} \\
    \text{s.t.} \quad & \|\m{E}_1\|_F^2 \le 1. \notag
\end{align}
\end{subequations}
Introducing the  multiplier $\lambda_1 \ge 0$ for the power constraint, forming the Lagrangian:
\begin{align}
    \mathcal{L}(\m{E}_1, \lambda_1) &= \sum_{j=1}^2 \left\|\m{D}_j^\dagger \m{H}_{j1}\m{E}_1 - \m{I}_L\right\|_F^2 \notag \\
    &\quad + \lambda_1 \left(\|\m{E}_1\|_F^2 - 1\right),\label{eq:lagrangian_tx1}
\end{align}
and setting the derivative with respect to $\m{E}_1^*$ to zero yields,
\begin{align}
    \bigg(\sum_{j=1}^2 \m{H}_{j1}^\dagger \m{D}_j \m{D}_j^\dagger &\m{H}_{j1} + \lambda_1 \m{I}_{M_1}\bigg)\m{E}_1 = \m{H}_{11}^\dagger \m{D}_1 + \m{H}_{21}^\dagger \m{D}_2. \label{eq:optimality_cond_tx1}
\end{align}
The optimal encoder $\m{E}_1^\star$ parameterized by $\lambda_1$ is:
\begin{align} 
%\label{eq:E1_star}
%    \m{E}_1^\star(\lambda_1) = \bigg( \sum_{j=1}^2 \m{H}_{j1}^\dagger \m{D}_j \m{D}_j^\dagger \m{H}_{j1} + \lambda_1 \m{I}_{M_1} \bigg)^{-1} \notag \\
%    &\quad \times \bigg( \sum_{j=1}^2 \m{H}_{j1}^\dagger \m{D}_j \bigg).
   \m{E}_1^\star(\lambda_1) = \bigg( \sum_{j=1}^2 \m{H}_{j1}^\dagger \m{D}_j \m{D}_j^\dagger \m{H}_{j1} + \lambda_1 \m{I}_{M_1} \bigg)^{-1}\sum_{j=1}^2 \m{H}_{j1}^\dagger \m{D}_j.
\end{align}
If the unconstrained solution satisfies $\|\m{E}_1^\star(0)\|_F^2 \le 1$, we set $\lambda_1 = 0$. Otherwise, the optimal multiplier $\lambda_1 > 0$ is the unique root of the monotonically decreasing function $\|\m{E}_1^\star(\lambda_1)\|_F^2 = 1$, which can be found efficiently via scalar bisection. The update for $\m{E}_2^\star$ is symmetric, obtained by substituting $(\m{H}_{11}, \m{H}_{21})$ with $(\m{H}_{12}, \m{H}_{22})$ in the derivations above.

\subsection{Decoder Optimization (Fixed Encoders)}
For fixed encoders $\{\m{E}_1, \m{E}_2\}$, the optimization over the decoders reduces to unconstrained strictly convex quadratic problems. The objective naturally decouples across the two receivers. For Rx-$1$, the subproblem is:
\begin{align} \label{eq:p3}
    \min_{\m{D}_1} \quad \sigma_s^2 \sum_{i=1}^2 \left\|\m{D}_1^\dagger \m{H}_{1i}\m{E}_i - \m{I}_L\right\|_F^2 + \sigma^2 \|\m{D}_1\|_F^2 \tag{$P_3$}.
\end{align}
Taking the derivative with respect to $\m{D}_1^*$ and equating to zero yields the standard Linear Minimum Mean Square Error (LMMSE) receiver:
\begin{align} \label{eq:D1_star}
    \m{D}_1^\star &= \bigg( \sum_{i=1}^2 \m{H}_{1i} \m{E}_i \m{E}_i^\dagger \m{H}_{1i}^\dagger + \frac{\sigma^2}{\sigma_s^2} \m{I}_{N_1} \bigg)^{-1}  \sum_{i=1}^2 \m{H}_{1i} \m{E}_i.
\end{align}
Similarly, the optimal decoder for Rx-$2$ is:
\begin{align} \label{eq:D2_star}
    \m{D}_2^\star &= \bigg( \sum_{i=1}^2 \m{H}_{2i} \m{E}_i \m{E}_i^\dagger \m{H}_{2i}^\dagger + \frac{\sigma^2}{\sigma_s^2} \m{I}_{N_2} \bigg)^{-1} \sum_{i=1}^2 \m{H}_{2i} \m{E}_i.
\end{align}

\subsection{Algorithm Analysis and Numerical Results}
Because the objective function is bounded below by zero and decreases monotonically with each block update, the alternating optimization procedure is guaranteed to converge to a stationary point. Note that global optimality is not strictly guaranteed due to the joint non-convexity.
%, empirical performance seems rather stable.

To benchmark the finite-SNR efficacy of this alternating optimization AirComp approach, we compare it against a traditional MAC-TDMA-based orthogonalization scheme operating under identical channel resources. Specifically, for an $M$-antenna MIMO network with two channel uses, we demand $2L$ total sums per receiver (recall that $L$ is the number of AirComp tasks per channel use), and study the \emph{Normalized MSE} per each computation, i.e. $\mathrm{MSE}(\{\m{E}_i\}, \{\m{D}_j\})/4L $. For the simulations, the real and imaginary parts of the message symbols $\ve{s}_i$ are generated as i.i.d. uniform random variables on $[-1,1]$.
\begin{itemize}
    \item \textbf{AO Algorithm (Proposed):} Both receivers operate simultaneously, computing $L$ sums per channel use over the 2 channel uses.
    \item \textbf{MAC Algorithm (Baseline):} To avoid mutual interference, receivers operate via Time-Division Multiple Access (TDMA). Rx-$1$ exclusively decodes all $2L$ sums in the first channel use, and Rx-$2$ decodes all $2L$ sums in the second use. We implement the iterative MAC-AirComp method from \cite{huRISAssistedOvertheAirFederated2022} for this baseline.
\end{itemize}

\begin{figure}[htbp]
  \centering
  \begin{subfigure}[t]{0.48\textwidth}
    \centering
    \includegraphics[width=\linewidth, height=0.6\linewidth]{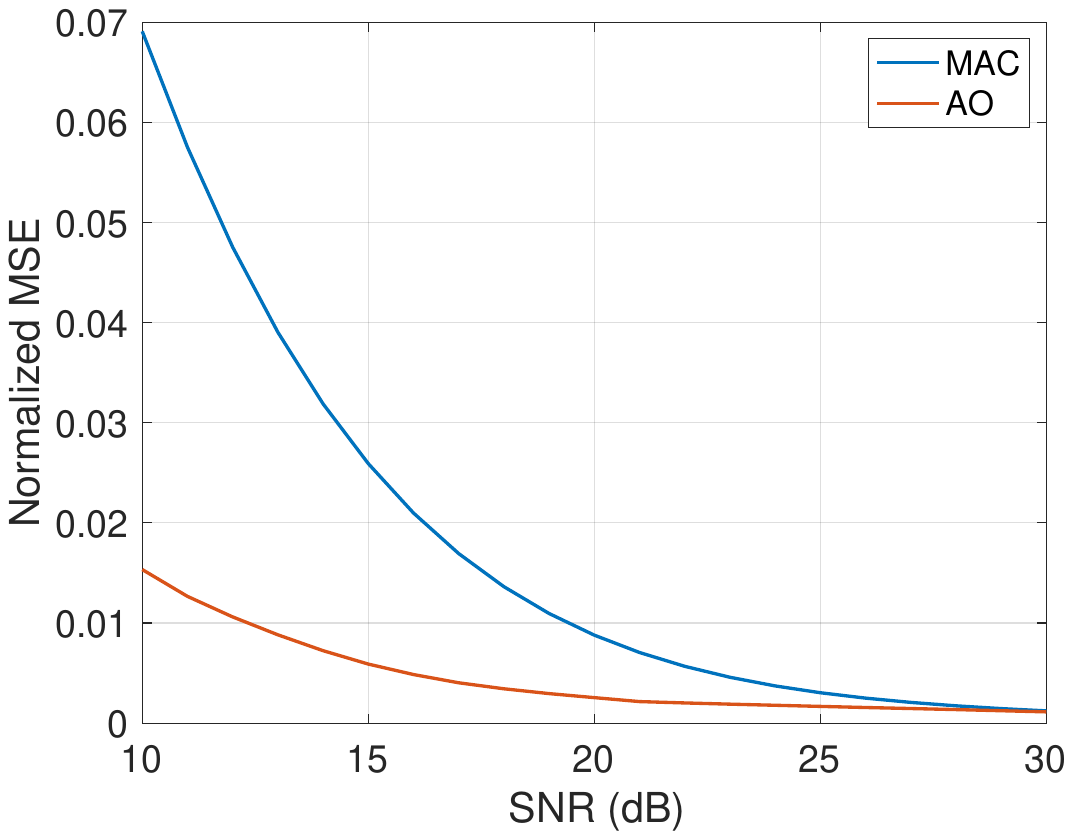} 
    \caption{$M=9$ and $L=4$.}
    \label{fig:2}
  \end{subfigure}
  \hfill
  \begin{subfigure}[t]{0.48\textwidth}
    \centering
    \includegraphics[width=\linewidth, height=0.6\linewidth]{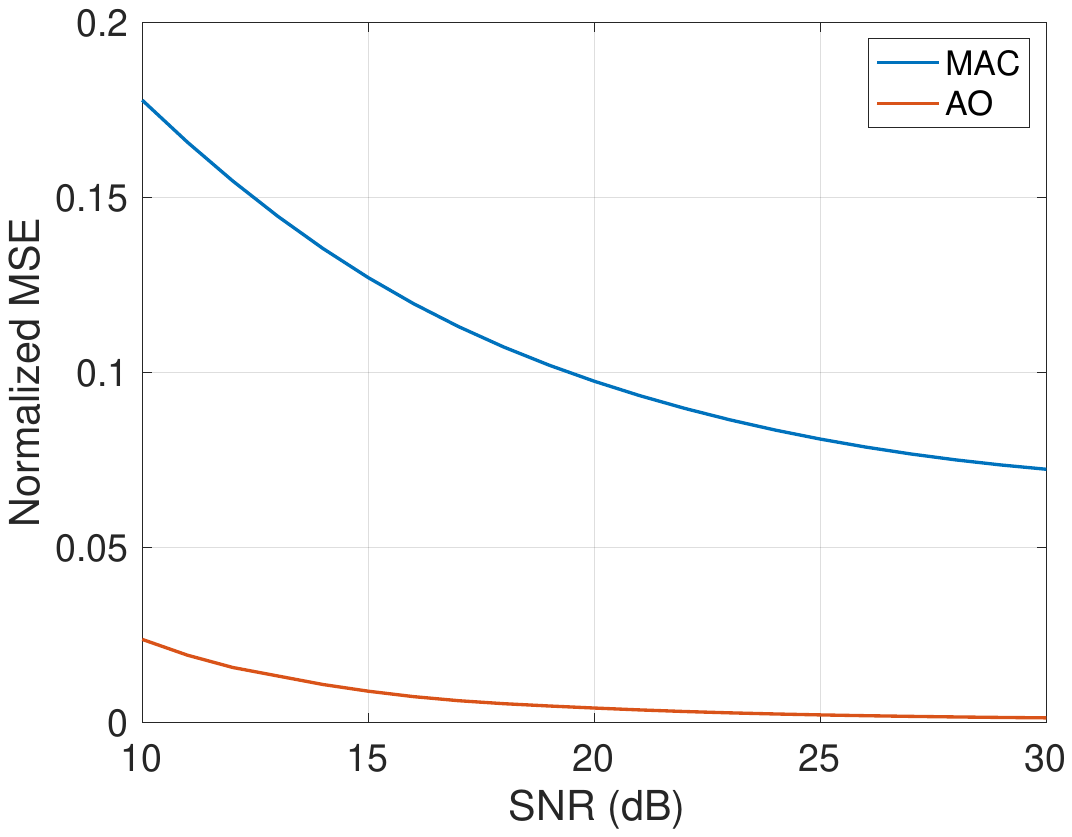}
    \caption{$M=9$ and $L=5$.}
    \label{fig:3}
  \end{subfigure}
  \caption{Normalized sum-MSE versus SNR.}
  \label{fig:pair}
\end{figure}

\begin{figure}[!t]
    \centering
    \includegraphics[width=0.5\textwidth, height=0.6\linewidth]{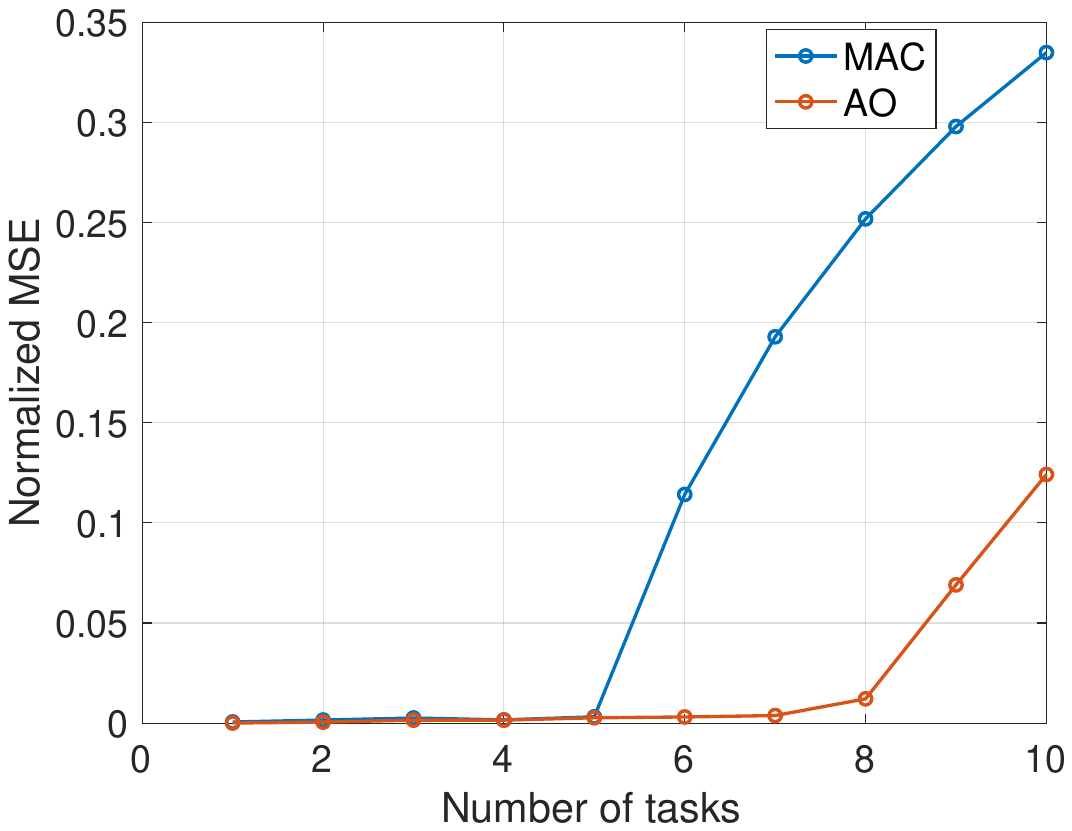}
    \caption{Normalized sum-MSE versus number of AirComp tasks per channel use ($M=10$, $\text{SNR} = 30 \text{dB}$).}
    \label{fig:4}
\end{figure}

As expected, the AO algorithm uniformly outperforms the MAC scheme, as illustrated in Figure~\ref{fig:2} for $M=9$ and $L=4$. The performance gap  diminishes as the SNR increases. The theoretical AirComp DoF limitation becomes apparent in Figure~\ref{fig:3} ($M=9, L=5$). Here, the MAC scheme's normalized MSE encounters a strict error floor and no longer vanishes at high SNR. This occurs because the baseline MAC-AirComp DoF is $\mathrm{ACDoF}_{\text{MAC}} = M = 9$, but the TDMA structure demands $2L=10$ sums per channel use—strictly exceeding the network's AirComp DoF.

Finally, Figure~\ref{fig:4} depicts the normalized MSE as a function of the number of tasks per channel use $L$ at an SNR of $30$\,dB with $10$ antennas. The MAC algorithm's error increases sharply  when $2L > \mathrm{ACDoF}_{\text{MAC}} = 10$. The AO algorithm  handles the load, with its error growing rapidly only after $L > \mathrm{ACDoF} = 20/3$. This validates the advantages of employing aligned encoder-decoder designs for two-user AirComp scenarios.

\section{Conclusion}

This paper characterized the AirComp DoF of two-transmitter, two-receiver networks for both SISO channels and generic MIMO settings. For the SISO case, the characterization encompasses both time-varying and time-invariant channels under the corresponding channel conditions. For the generic MIMO case, we developed a unified achievability framework based on subnetwork packing, whereby the original network is decomposed into subnetworks non-physically. The converse results for both the SISO and MIMO settings were established via information-theoretic MSE--equivocation arguments combined with network invertibility analysis.

In addition to the asymptotic AirComp DoF characterization, we proposed a linear transceiver design for finite-SNR operation based on alternating minimization of the network MSE. Numerical results illustrated that the proposed design improves upon conventional MAC-TDMA-based baselines and enables reliable computation in regimes where standard MAC-oriented schemes are ineffective. These results provide a unified perspective on AirComp over two-user networks and offer insight into the design of computation-efficient multi-user wireless networks.

\section*{Acknowledgement}
The authors acknowledge the use of Gemini to improve the readability of this manuscript, to assist in optimizations and simulation codes in Section \ref{sec:numerical}, and to support the formatting of mathematical expressions. The authors have thoroughly reviewed and edited the text, independently verified all analytical derivations and simulation results, and assume full responsibility for the overall content and integrity of the final publication.

\appendices 
\section{AirComp DoF of the SISO Multiple Access Channel} \label{appendix_sisomac}

As a baseline, we evaluate the conventional SISO MAC. This formulation is a degenerate case of the general network model presented in \eqref{eq:modely1}, obtained by setting $\ve{y}_1 = \ve{y}_2 = \ve{y}$ and $\ve{z}_1 = \ve{z}_2 = \ve{z}$.

\begin{lemma}\label{lemma:SISOMAC}
For any given SISO MAC realization $\mathscr{H}$, the AirComp DoF is exactly one; that is, $\mathrm{ACDoF}(\mathscr{H}) = 1$.
\end{lemma}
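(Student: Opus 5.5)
The proof has two halves: an achievability scheme with $K=N=1$, and an information-theoretic converse showing $K/N\le 1$.

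\textbf{Achievability.} For each channel use $\nu$ we use uncoded channel-inversion transmission. Set $x_i(\nu)=\mu\, s_i(\nu)/\m{H}_{i}(\nu)$, where $\m{H}_i(\nu)$ is the (scalar) MAC coefficient from Tx-$i$. The channel is bounded away from zero by $1/M_o$, so a constant $\mu>0$ independent of $\rho$ satisfies the power constraint \eqref{eq:P}; for instance $\mu^2\sigma_s^2 M_o^2\le 1$. The receiver outputs $\hat f(\nu)=y(\nu)/\mu=s_1(\nu)+s_2(\nu)+z(\nu)/\mu$. This gives MSE $=2K\sigma^2/\mu^2=O(1/\rho)$ with $K=N$, so $\mathrm{ACDoF}(\mathscr{H})\ge 1$.

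\textbf{Converse.} Assume $(K,N)$ computation is feasible. The first step is the MSE–equivocation bound that the main text later cites as \eqref{eq:mac_1}--\eqref{eq:mac_4}. Since $f^K$ is Gaussian with variance $2\sigma_s^2$ per entry, $h(f^K)=O(1)$. Because the Gaussian maximizes entropy for a given second moment,
\[
h(f^K|y^N)\le h(f^K-\hat f^K)\le K\log\bigl(\pi e\,\mathrm{MSE}/K\bigr)=-K\log\rho+O(1),
\]
using $\mathrm{MSE}=O(1/\rho)$ and concavity of $\log$ (Jensen) across the $K$ entries. Hence $I(f^K;y^N)\ge K\log\rho+o(\log\rho)$.

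Next we upper bound this mutual information by $I(f^K;y^N)\le h(y^N)-h(z^N)$. The second term uses the Markov chain $f^K\to(x_1^N,x_2^N)\to y^N$, so that $h(y^N|f^K)\ge h(y^N|x_1^N,x_2^N)=h(z^N)=-N\log\rho+O(N)$. For the first term, Gaussian maximum entropy together with the power constraint and $|\m{H}_i|\le M_o$ gives
\[
h(y^N)\le N\log\bigl(\pi e(2M_o^2+1/\rho)\bigr)\cdot O(1),
\]
which is $O(N)$. The factor-of-two subtlety in the received power is handled by Cauchy–Schwarz on the two superposed signals; this is the single-output analogue of the power bound in Appendix~\ref{appendix_4}.

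Combining the two bounds gives $K\log\rho\le N\log\rho+o(\log\rho)$. Dividing by $N\log\rho$ and letting $\rho\to\infty$ yields $K/N\le 1$.

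The main obstacle is making the equivocation step rigorous. It requires the estimation-error entropy bound with the error covariance summed over all $K$ entries, plus the observation that $h(f^K)$ is a finite constant independent of $\rho$. Both follow from the maximum-entropy property of the Gaussian and Jensen's inequality, so no real difficulty is expected. Everything else is routine bookkeeping of $o(\log\rho)$ terms.
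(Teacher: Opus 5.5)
Your proposal is correct and follows the paper's proof essentially step for step. Achievability is uncoded channel inversion with a $\rho$-independent scaling $\mu$. The converse plays the MSE--equivocation lower bound $I(f^K;y^N)\ge K\log\rho+o(\log\rho)$ against $I(f^K;y^N)\le h(y^N)-h(z^N)$, with a Gaussian maximum-entropy bound on $h(y^N)$.

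The differences are cosmetic. You lower-bound $h(y^N|f^K)$ by conditioning on $(x_1^N,x_2^N)$ through a Markov chain, where the paper conditions on $s_2^K$. Your Cauchy--Schwarz power bound gives the constant $4M_o^2$ rather than $2M_o^2$, so your displayed bound should read $N\log\bigl(\pi e(4M_o^2+1/\rho)\bigr)$ without the stray $\cdot O(1)$. This route avoids the paper's zero-mean assumption and makes no difference at the DoF scale.
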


\begin{proof}
\textit{Converse:} Assume that for any SNR $\rho$, there exists a computation code $\mathfrak{C}_{\rho}$ of length $N_{\rho}$ encoding $K_{\rho}$ symbols such that $\mathrm{MSE}(\mathfrak{C}_{\rho}, \mathscr{H}) = O(1/\rho)$. Let $d = \lim_{\rho \to \infty} K_{\rho}/N_{\rho}$ denote the achieved AirComp DoF. Throughout this analysis, we assume the channel realization $\mathscr{H}$ remains fixed and is globally known.

Applying the MSE-equivocation bound (detailed in Appendix~\ref{appendix_mse}), the mutual information between the target computation $f^K$ and the received signal $\ve{y}^N$ is lower-bounded by:
\begin{align} 
    I(f^K; \ve{y}^N) &= h(f^K) - h(f^K | \ve{y}^N) \label{eq:mac_1} \\
    &\ge h(f^K) + K \log\rho + o(\log \rho) \label{eq:mac_2} \\
    &= K \log\rho + o(\log \rho). \label{eq:mac_3}
\end{align}
Note that \eqref{eq:mac_3} holds because the unconditional differential entropy scales as $h(f^K) = K h(f) = o(\log \rho)$.

To establish an upper bound on $I(f^K; \ve{y}^N)$, we evaluate the differential entropy of the received signal:
\begin{align} \label{eq:mac_4}
    I(f^K; \ve{y}^N) &= h(\ve{y}^N) - h(\ve{y}^N | f^K).
\end{align}
The first term is constrained by the average power limit $\mathbb{E}[\lVert \ve{y} \rVert^2] \le 2M_o^2 + \sigma^2$. Because a Gaussian distribution maximizes differential entropy for a specified variance, we have $h(\ve{y}^N) \le N\log(2M_o^2 + 1/\rho) + o(\log\rho)$. For the conditional entropy term, conditioning on the independent message $s_2^K$ yields:
\begin{align} \label{eq:mac_entropy_cond}
    h(\ve{y}^N | f^K) &\ge h(\ve{y}^N | f^K, s_2^K)  \\
    &= h(\ve{z}^N | f^K, s_2^K) = h(\ve{z}^N)  \\
    &= -N\log\rho + o(\log\rho).
\end{align}
Substituting these bounds into \eqref{eq:mac_4} and combining with \eqref{eq:mac_3} produces the unified inequality:
\begin{align}
    K \log\rho + o(\log \rho) \le N \log \left(2M_o^2 + \frac{1}{\rho} \right) + N\log\rho + o(\log \rho).
\end{align}
Dividing this inequality by $N\log\rho$ and evaluating the limit as $\rho \to \infty$ demonstrates that $d \le 1$.

\textit{Achievability:} Consider a single-shot $(1,1)$ AirComp code $\mathfrak{C}$ utilizing the following zero-forcing transmission strategy:
\begin{align}
    \ve{x}_1(\nu) = \frac{\mu}{H_1(\nu)} s_1(\nu), \quad \ve{x}_2(\nu) = \frac{\mu}{H_2(\nu)} s_2(\nu),
\end{align}
where the receiver estimates the sum computation via $\hat{f} = \ve{y}/\mu$. By setting the scaling factor to $\mu = 1 / (M_o \sigma_s)$, we strictly satisfy the transmit power constraints. Consequently, the mean-squared error evaluates to $\mathrm{MSE}(\mathfrak{C}, \mathscr{H}) = \sigma^2 / \mu^2 = O(1/\rho)$. This confirms that an AirComp DoF of $1$ is strictly achievable, concluding the proof.
\end{proof}

\section{Proof of Lemma \ref{lem:achievability_packing}} \label{appendix_sisomacchievability}

Choose $N=3$ channel uses, and set $K/N = (2/3)a+b+c$,
so that $K=2a+3b+3c$ is an integer. Let $\m{U}_i\in\mathbb{C}^{NM_i\times K}$ and $\m{V}_i\in\mathbb{C}^{NN_i\times K}$ denote the precoding and combining matrices at Tx-$i$ and Rx-$i$, respectively, for $i\in\{1,2\}$, and let $\ve{x}_i^{(N)} = \m{U}_i \tilde{\ve{x}}_i^{(K)}$, $\tilde{\ve{y}}_i^{(K)} = \m{V}_i^\dagger \ve{y}_i^{(N)}$,
%\begin{align}
%    \ve{x}_i^{(N)} = \m{U}_i \tilde{\ve{x}}_i^{(K)}, \qquad
%    \tilde{\ve{y}}_i^{(K)} = \m{V}_i^\dagger \ve{y}_i^{(N)},
%\end{align}
where $\tilde{\ve{x}}_i^{(K)}\in\mathbb{C}^K$ and $\tilde{\ve{y}}_i^{(K)}\in\mathbb{C}^K$ are the effective inputs and outputs. The effective outputs can be expressed as,
\begin{align} \label{eq:MACs}
    \tilde{\ve{y}}_i^{(K)}
    &= \underbrace{\left(\m{V}_i^\dagger \m{H}_{i1}^{[N]} \m{U}_1\right)}_{\tilde{\m{H}}_{i1}} \tilde{\ve{x}}_1^{(K)} \notag \\
    &\quad + \underbrace{\left(\m{V}_i^\dagger \m{H}_{i2}^{[N]} \m{U}_2\right)}_{\tilde{\m{H}}_{i2}} \tilde{\ve{x}}_2^{(K)} + \underbrace{\m{V}_i^\dagger \ve{z}_i^{(N)}}_{\tilde{\ve{z}}_i}.
\end{align}
Thus, each receiver observes a $K$-antenna MAC with effective channel matrices
\begin{align}
    \tilde{\m{H}}_{ij} \triangleq \m{V}_i^\dagger \m{H}_{ij}^{[N]} \m{U}_j \in \mathbb{C}^{K\times K}.
\end{align}
Next define
\begin{align}\label{eq:defHU}
    \m{H}_U \triangleq
    \begin{bsmallmatrix}
        \m{H}_{11}^{[N]} & \m{H}_{12}^{[N]} \\
        \m{H}_{21}^{[N]} & \m{H}_{22}^{[N]}
    \end{bsmallmatrix}
    \begin{bsmallmatrix}
        \m{U}_1 & \\
        & \m{U}_2
    \end{bsmallmatrix}
    \in \mathbb{C}^{N(N_1+N_2)\times 2K}.
\end{align}
Since $\m{H}_U$ is a tall generic matrix, it has full column rank almost surely. Our goal is to enforce
\begin{align}
    \tilde{\m{H}}_{11}=\tilde{\m{H}}_{21},
    \qquad
    \tilde{\m{H}}_{12}=\tilde{\m{H}}_{22},
\end{align}
so that the two effective MACs are identical up to noise, while simultaneously ensuring that each effective channel matrix has full rank $K$. To this end, let us define the combining matrices according to
\begin{align} \label{eq:chooseV}
    &\begin{bmatrix} \m{V}_1^\dagger & -\m{V}_2^\dagger \end{bmatrix} \notag \\
    &= \begin{cases} \m{Q}\bigl(\m{I}_{\bar{N}} - \m{H}_U(\m{H}_U^\dagger\m{H}_U)^{-1}\m{H}_U^\dagger\bigr), & \text{if } \m{H}_U^\dagger\m{H}_U \text{ inv.} \\ \begin{bmatrix}\m{0} & \m{0}\end{bmatrix}, & \text{otherwise} \end{cases}
\end{align}
for some $\m{Q}\in\mathbb{C}^{K\times N(N_1+N_2)}$. From \eqref{eq:defHU} and \eqref{eq:chooseV}, we obtain
\begin{align}
    \begin{bmatrix}
        \m{V}_1^\dagger & -\m{V}_2^\dagger
    \end{bmatrix}\m{H}_U = \m{0},
\end{align}
which implies
\begin{align}
    \m{V}_1^\dagger \m{H}_{11}^{[N]} \m{U}_1
    = \m{V}_2^\dagger \m{H}_{21}^{[N]} \m{U}_1, &&
%    \label{eq:fe1} \\
    \m{V}_1^\dagger \m{H}_{12}^{[N]} \m{U}_2
    = \m{V}_2^\dagger \m{H}_{22}^{[N]} \m{U}_2.
    \label{eq:fe2}
\end{align}
Hence the two effective MACs are identical.

It remains to verify the full-rank property. For each $(j,i)\in\{1,2\}^2$, define
\begin{align}\label{eq:deffji}
    f_{ji} \triangleq \det\!\bigl(\m{V}_j^\dagger \m{H}_{ji}^{[N]} \m{U}_i\bigr).
\end{align}
Recall that for an $N$-symbol extension, the channel matrices $\m{H}_{ji}^{[N]} = \mathrm{diag}(\m{H}_{ji}(1), \dots, \m{H}_{ji}(N))$ are inherently block-diagonal. Therefore, $f_{ji}$ is viewed as a rational polynomial in the independent entries of the individual channel matrices $\m{H}_{ji}(n)$ for $n \in \{1,\ldots,N\}$, as well as the entries of $\m{U}_1$, $\m{U}_2$, and $\m{Q}$.

It suffices to exhibit one deterministic \emph{assignment} for $\m{H}_{ji}(n)$ (which directly constructs $\m{H}_{ji}^{[N]}$), $\m{U}_i$, and $\m{Q}$ for which all four determinants are nonzero. Then each $f_{ji}$ is a nonzero rational polynomial. Because the underlying actual channel coefficients across the $N$ channel uses are independent and continuously distributed, it follows that $f_{ji}\neq 0$ almost surely.

In the context of the rational polynomials $f_{ji}$, an ``assignment'' refers exclusively to fixing the values of the independent variables: the originally random channel matrices $\m{H}_{ji}^{[N]}$, the precoding matrices $\m{U}_i$, and the auxiliary matrix $\m{Q}$. The combining matrices $\m{V}_j$ are dependent variables uniquely determined from these assignments via the null space condition \eqref{eq:chooseV}. 
\begin{enumerate}
\item For each $(1,1;2,1;\mathcal{H})$ network (let us identify this case by including a $(2,1)$ in the subscript) over each channel use $\nu\in\{1,2,3\}$, choose the same assignment $\m{H}_{11,(2,1)} = \begin{bmatrix} 1 & 0 \end{bmatrix}^T$,    $\m{H}_{12,(2,1)} = \begin{bmatrix} 0 & 1 \end{bmatrix}^T$,    $\m{H}_{21,(2,1)} = \m{H}_{22,(2,1)} = \m{U}_{1,(2,1)}=\m{U}_{2,(2,1)}=1, \m{Q}_{(2,1)}=\begin{bmatrix}1&1&-1\end{bmatrix}$ which, according to \eqref{eq:chooseV} produces $\m{V}_{1,(2,1)}=\begin{bmatrix}1&1\end{bmatrix}$ and $\m{V}_{2,(2,1)}=1$, and $f_{ji,(2,1)}=1$ for all $i,j\in\{1,2\}$. 
\item For each $(1,1;1,2;\mathcal{H})$ network (identified by including a $(1,2)$ in the subscript), over each channel use choose the assignment $\m{H}_{21,(1,2)} = \begin{bmatrix} 1 & 0 \end{bmatrix}^T,     \m{H}_{22,(1,2)} = \begin{bmatrix} 0 & 1 \end{bmatrix}^T,     \m{H}_{11,(1,2)} = \m{H}_{12,(1,2)} = \m{U}_{1,(1,2)}=\m{U}_{2,(1,2)}=1, \m{Q}_{(1,2)}=\begin{bmatrix}1&-1&-1\end{bmatrix}$ which, according to \eqref{eq:chooseV} produces $\m{V}_{2,(1,2)}=\begin{bmatrix}1&1\end{bmatrix}$ and $\m{V}_{1,(1,2)}=1$, and $f_{ji,(1,2)}=1$ for all $i,j\in\{1,2\}$. 
\item For each $(1,1;1,1;\mathcal{H})$ (SISO) subnetwork choose the assignments over $N=3$ channel uses as,
\begin{align}
    \m{H}_{\mathrm{SISO}}(1) &= \begin{bsmallmatrix}-1&1\\1&1\end{bsmallmatrix}, \m{H}_{\mathrm{SISO}}(2) = \begin{bsmallmatrix}1&1\\1&-1\end{bsmallmatrix}, \notag \\
    \m{H}_{\mathrm{SISO}}(3) &= \begin{bsmallmatrix}1&0\\0&1\end{bsmallmatrix}, \notag \\
    \m{U}_{1,\mathrm{SISO}} &= \begin{bsmallmatrix}1&0\\ 0&1\\ 1&0\end{bsmallmatrix}, \m{U}_{2,\mathrm{SISO}} = \begin{bsmallmatrix}1&0\\ 0&1\\ 0&1\end{bsmallmatrix}, \notag \\
    \m{Q}_{\mathrm{SISO}} &= \begin{bsmallmatrix}1 & 0 & 2&-1&0&0\\ 0&1&0&0&-1&-2\end{bsmallmatrix} \notag \\
    \implies \m{V}_{1,\mathrm{SISO}} &= \begin{bsmallmatrix}1&0\\ 0&1\\ 2&0\end{bsmallmatrix}, \m{V}_{2,\mathrm{SISO}} = \begin{bsmallmatrix}1&0\\ 0&1\\ 0&2\end{bsmallmatrix}, \notag \\
    f_{ji,\mathrm{SISO}} &\stackrel{\eqref{eq:deffji}}{=} \det(\m{I}_2)=1, \;\; \forall i,j\in\{1,2\}. \label{eq:siso_assignments}
\end{align}
\end{enumerate}
Next, define the partial block-diagonal operator.

\begin{definition}[Partial Diagonal Matrix]
For matrices $\m{A}_i \in \mathbb{C}^{m_i\times n_i}$, $i\in[1:k]$, and integers
$M \ge \sum_{i=1}^k m_i$ and $N \ge \sum_{i=1}^k n_i$, define \emph{partial diagonal matrix}
\begin{align}
    \mathrm{pdiag}_{M,N}(\m{A}_1,\dots,\m{A}_k)\in\mathbb{C}^{M\times N}
\end{align}
as the matrix obtained by placing $\m{A}_1,\dots,\m{A}_k$ consecutively along the main block diagonal starting from the upper-left corner, with all remaining entries set to zero.
\end{definition}

Using this notation, we construct the global assignments as
\begin{align} \label{eq:global_H_assignment}
    &\m{H}_{ji}(n) = \mathrm{pdiag}_{N_j,M_i} \Big(\underbrace{\m{H}_{ji,\mathrm{SISO}}(n),\ldots,\m{H}_{ji,\mathrm{SISO}}(n)}_{a\ \text{copies}}, \notag \\
    &\underbrace{\m{H}_{ji,(2,1)}(n),\ldots,\m{H}_{ji,(2,1)}(n)}_{b\ \text{copies}},  \underbrace{\m{H}_{ji,(1,2)}(n),\ldots,\m{H}_{ji,(1,2)}(n)}_{c\ \text{copies}} \Big).
\end{align}
Note that the column space of both $\m{H}_{1i}^{[N]}$ and $\m{H}_{2i}^{[N]}$ corresponds to the transmit antennas of Tx-$i$ across the $N$ channel uses. Thus, a single column permutation matrix $\m{\Pi}_{c,i}\in\mathbb{R}^{NM_i\times NM_i}$ representing a reordering of Tx-$i$'s input dimensions naturally applies to both channel matrices simultaneously. Similarly, the row space of both $\m{H}_{j1}^{[N]}$ and $\m{H}_{j2}^{[N]}$ corresponds to the receive antennas of Rx-$j$, meaning a single row permutation matrix $\m{\Pi}_{r,j}\in\mathbb{R}^{NN_j\times NN_j}$ simultaneously acts on the outputs of both channels. Therefore, by applying suitable column permutation matrices $\m{\Pi}_{c,i}\in\mathbb{R}^{NM_i\times NM_i}$ at Tx-$i$ and row permutation matrices $\m{\Pi}_{r,j}\in\mathbb{R}^{NN_j\times NN_j}$ at Rx-$j$ for $i,j \in \{1,2\}$, let us  reorder the antennas so that
\begin{align}
    \m{\Pi}_{r,j}^T &\m{H}_{ji}^{[N]} \m{\Pi}_{c,i} = \mathrm{pdiag}_{NN_j,\,NM_i} \Big( \underbrace{\m{H}_{ji,\mathrm{SISO}}^{[N]},\dots,\m{H}_{ji,\mathrm{SISO}}^{[N]}}_{a \text{ copies}}, \notag \\
    &\quad \underbrace{\m{H}_{ji,(2,1)}^{[N]},\dots,\m{H}_{ji,(2,1)}^{[N]}}_{b \text{ copies}},  \underbrace{\m{H}_{ji,(1,2)}^{[N]},\dots,\m{H}_{ji,(1,2)}^{[N]}}_{c \text{ copies}} \Big). \label{eq:H_reordered}
\end{align}
Now we choose $\m{U}_i, \m{V}_j$ as follows.
\begin{align} \label{eq:precoder_assignment}
    \m{U}_i &= \m{\Pi}_{c,i} \, \mathrm{pdiag}_{NM_i,\,K} \Big(\underbrace{\m{U}_{i,\mathrm{SISO}},\ldots,\m{U}_{i,\mathrm{SISO}}}_{a\ \text{copies}}, \notag \\
    &\quad \underbrace{\m{U}_{i,(2,1)},\ldots,\m{U}_{i,(2,1)}}_{b\ \text{copies}},  \underbrace{\m{U}_{i,(1,2)},\ldots,\m{U}_{i,(1,2)}}_{c\ \text{copies}} \Big)\\
    \label{eq:combiner_assignment}
    \m{V}_j &= \m{\Pi}_{r,j} \, \mathrm{pdiag}_{NN_j,\,K} \Big( \underbrace{\m{V}_{j,\mathrm{SISO}},\ldots,\m{V}_{j,\mathrm{SISO}}}_{a\ \text{copies}}, \notag \\
    &\quad \underbrace{\m{V}_{j,(2,1)},\ldots,\m{V}_{j,(2,1)}}_{b\ \text{copies}},  \underbrace{\m{V}_{j,(1,2)},\ldots,\m{V}_{j,(1,2)}}_{c\ \text{copies}} \Big).
\end{align}
Under these choices, because the permutation matrices are orthogonal (i.e., $\m{\Pi}_{r,j}^T \m{\Pi}_{r,j} = \m{I}$ and $\m{\Pi}_{c,i}^T \m{\Pi}_{c,i} = \m{I}$), they cancel out in the matrix product. Furthermore, while the constituent matrices are defined via the partial block-diagonal operator, their product perfectly aligns the non-zero blocks. Since the sum of the column dimensions of the individual blocks is exactly $2a + 3b + 3c = K$, the resulting $K \times K$ matrix has no residual zero-padding. Thus, the product naturally reduces to a standard block-diagonal matrix:
\begin{align} \label{eq:effective_channel_diag}
    \m{V}_j^\dagger \m{H}_{ji}^{[N]} \m{U}_i &= \mathrm{diag} \Big( \underbrace{\m{V}_{j,\mathrm{SISO}}^\dagger \m{H}_{ji,\mathrm{SISO}}^{[N]} \m{U}_{i,\mathrm{SISO}},\ldots}_{a\ \text{copies}}, \notag \\
    &\quad \underbrace{\m{V}_{j,(2,1)}^\dagger \m{H}_{ji,(2,1)}^{[N]} \m{U}_{i,(2,1)},\ldots}_{b\ \text{copies}}, \notag \\
    &\quad \underbrace{\m{V}_{j,(1,2)}^\dagger \m{H}_{ji,(1,2)}^{[N]} \m{U}_{i,(1,2)},\ldots}_{c\ \text{copies}} \Big).
\end{align}
Therefore,
\begin{align}
    \m{V}_1^\dagger \m{H}_{11}^{[N]} \m{U}_1
    &=
    \m{V}_2^\dagger \m{H}_{21}^{[N]} \m{U}_1, \label{eq:null1}\\
    \m{V}_1^\dagger \m{H}_{12}^{[N]} \m{U}_2
    &=
    \m{V}_2^\dagger \m{H}_{22}^{[N]} \m{U}_2,
    \label{eq:null2}
\end{align}
and
\begin{align}
    \rank{\m{V}_j^\dagger \m{H}_{ji}^{[N]} \m{U}_i}
    &=
    a\,\rank{\m{V}_{j,\mathrm{SISO}}^\dagger \m{H}_{ji,\mathrm{SISO}}^{[N]} \m{U}_{i,\mathrm{SISO}}}\nonumber\\
    &\quad
    + b\,\rank{\m{V}_{j,(2,1)}^\dagger \m{H}_{ji,(2,1)}^{[N]} \m{U}_{i,(2,1)}}\nonumber\\
    &\quad
    + c\,\rank{\m{V}_{j,(1,2)}^\dagger \m{H}_{ji,(1,2)}^{[N]} \m{U}_{i,(1,2)}} \nonumber\\
    &= 2a+3b+3c = K.
\end{align}
From \eqref{eq:null1} and \eqref{eq:null2}, it follows that
\begin{align}
    \begin{bmatrix}
        \m{V}_1^\dagger & -\m{V}_2^\dagger
    \end{bmatrix}
\end{align}
lies in the left null space of $\m{H}_U$. Hence there exists a matrix $\m{Q}$ such that
\begin{align}
    \begin{bmatrix}
        \m{V}_1^\dagger & -\m{V}_2^\dagger
    \end{bmatrix}
    =
    \m{Q}
    \bigl(
    \m{I}_{N(N_1+N_2)}
    -
    \m{H}_U(\m{H}_U^\dagger \m{H}_U)^{-1}\m{H}_U^\dagger
    \bigr),
\end{align}
because the row space of
\begin{align}
    \m{I}_{N(N_1+N_2)} - \m{H}_U(\m{H}_U^\dagger \m{H}_U)^{-1}\m{H}_U^\dagger
\end{align}
is precisely the full left null space of $\m{H}_U$.

Since we have explicitly constructed one assignment for which all determinants $f_{ji}$ are nonzero, and each $f_{ji}$ is a nonzero rational polynomial, it follows that for generic channel realizations, we have
\begin{align}
    \rank{\tilde{\m{H}}_{ji}} \overset{\text{a.s.}}{=} K.
\end{align}

Consequently, the two receivers observe identical $K\times K$ MACs with full-rank effective channels:
\begin{align}
    \tilde{\ve{y}}_1^{(K)}
    &= \tilde{\m{H}}_1 \tilde{\ve{x}}_1^{(K)} + \tilde{\m{H}}_2 \tilde{\ve{x}}_2^{(K)} + \tilde{\ve{z}}_1, \label{eq:MAC1} \\
    \tilde{\ve{y}}_2^{(K)}
    &= \tilde{\m{H}}_1 \tilde{\ve{x}}_1^{(K)} + \tilde{\m{H}}_2 \tilde{\ve{x}}_2^{(K)} + \tilde{\ve{z}}_2. \label{eq:MAC2}
\end{align}
The additive noise at each receiver has covariance of order $O(\sigma^2)$. Now let each transmitter further precode via $\tilde{\m{H}}_i^{-1}$, i.e.,
\begin{align}
    \tilde{\ve{x}}_i^{(K)} = \tilde{\m{H}}_i^{-1}\hat{\ve{x}}_i^{(K)}.
\end{align}
Then both effective MACs reduce to
\begin{align}
    \tilde{\ve{y}}_1^{(K)} &= \hat{\ve{x}}_1^{(K)} + \hat{\ve{x}}_2^{(K)} + \tilde{\ve{z}}_1, \\
    \tilde{\ve{y}}_2^{(K)} &= \hat{\ve{x}}_1^{(K)} + \hat{\ve{x}}_2^{(K)} + \tilde{\ve{z}}_2.
\end{align}
Thus, we obtain $K$ parallel scalar MACs, each with noise variance $O(\sigma^2)$ and transmit power independent of $\sigma^2$ (though dependent on the channel realization). By Lemma~\ref{lemma:SISOMAC}, each scalar MAC achieves $1$ AirComp DoF for both receivers. Hence the total achievable AirComp DoF is $K$, which yields
\begin{align}
    \frac{K}{N}=\frac{2}{3}a+b+c.
\end{align}
This completes the proof.

\section{The MSE-Equivocation Bound}\label{appendix_mse}

\begin{lemma}[The MSE-Equivocation Bound]
Given a $(K, N)$ computation code $\mathfrak{C}$ for an $(M_1, M_2; N_1, N_2; \mathscr{H})$ network, the conditional differential entropy $h(f^K |\mathbf{y}_j^N)$ for each receiver $j \in \{1,2\}$ is bounded by
\begin{align}
h(f^K |\mathbf{y}_j^N) \le K \log\left( \frac{\pi e}{K} \mathrm{MSE}(\mathfrak{C}, \mathscr{H}) \right).
\end{align}
Furthermore, if the computation is feasible, i.e., $\mathrm{MSE}(\mathfrak{C}, \mathscr{H}) = O(1/\rho)$ as $\rho \to \infty$, then
\begin{align} \label{eq:lm2.2_final}
h(f^K |\mathbf{y}_j^N) \le -K \log \rho + o(\log \rho), \quad \rho \to \infty.
\end{align}
\end{lemma}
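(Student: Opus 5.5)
The plan is the standard estimation-theoretic argument: the conditional entropy of $f^K$ given an estimate is at most that of a Gaussian matched to the estimation error.

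Fix $j\in\{1,2\}$ and let $\hat f_j^K=\mathcal{D}_j(\ve{y}_j^N)$. Since $\hat f_j^K$ is a deterministic function of $\ve{y}_j^N$, subtracting it does not change the conditional entropy. Conditioning cannot increase entropy. Together these give
\begin{align}
h(f^K\mid\ve{y}_j^N)=h(f^K-\hat f_j^K\mid\ve{y}_j^N)\le h(f^K-\hat f_j^K).
\end{align}
The error vector $e^K\triangleq f^K-\hat f_j^K\in\mathbb{C}^K$ has second moment $\sum_\kappa\mathbb{E}|e(\kappa)|^2\le \mathrm{MSE}(\mathfrak{C},\mathscr{H})$. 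This is because the MSE sums the errors over both receivers, so the $j$-th summand is no larger than the total.

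Among complex random vectors with a given trace of the second-moment matrix, the circularly symmetric Gaussian with scaled-identity covariance maximizes differential entropy. To see this, first bound $h(e^K)\le\sum_\kappa h(e(\kappa))$. Then bound each term by $h(e(\kappa))\le\log(\pi e\,\mathbb{E}|e(\kappa)|^2)$. Finally apply Jensen's inequality (concavity of $\log$) to obtain
\begin{align}
h(e^K)\le K\log\Bigl(\frac{\pi e}{K}\sum_\kappa\mathbb{E}|e(\kappa)|^2\Bigr)\le K\log\Bigl(\frac{\pi e}{K}\mathrm{MSE}(\mathfrak{C},\mathscr{H})\Bigr),
\end{align}
using monotonicity of $\log$ for the last step. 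This proves the first claim.

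For the asymptotic statement, feasibility gives a constant $c$, independent of $\rho$, with $\mathrm{MSE}\le c/\rho$ for all large $\rho$. Substituting,
\begin{align}
h(f^K\mid\ve{y}_j^N)\le -K\log\rho+K\log(\pi e c/K).
\end{align}
The second term is a constant because $K$ is fixed, so it is $o(\log\rho)$. This yields the second claim.

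The only delicate point is the maximum-entropy step. One must use the second moment rather than the variance, which is harmless since centering only lowers the variance. One must also note that the bound holds even if $e^K$ has no density: in that case $h=-\infty$ and the inequality is trivial. I expect no real obstacle beyond stating these conventions.
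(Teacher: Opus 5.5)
Your proposal is correct and follows essentially the same chain as the paper: translation invariance given $\mathbf{y}_j^N$, subadditivity over $\kappa$, removing the conditioning, the scalar Gaussian maximum-entropy bound, concavity of the logarithm, and bounding the receiver-$j$ error by the total $\mathrm{MSE}$, then substituting $\mathrm{MSE}\le c/\rho$. The only difference is that you drop the conditioning before applying subadditivity rather than after, which is immaterial.
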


\begin{proof}
For a fixed code $\mathfrak{C}$ and channel realization $\mathscr{H}$, we suppress them in the conditioning for notational brevity. Since the estimate $\hat{f}_j^K$ is a deterministic function of the received signal $\mathbf{y}_j^N$ for a given code $\mathfrak{C}$ and channel realization $\mathscr{H}$, we have,
%\begin{align}
$h(f^K| \mathbf{y}_j^N) = h(f^{(K)} - \hat{f}_j^{(K)} | \mathbf{y}_j^N)$ 
%\label{eq:proof_step1}\\
%&
$\stackrel{(a)}{\le} \sum_{k=1}^K h(f(k) - \hat{f}_j(k)| \mathbf{y}_j^N)$
%\nonumber \\
%&
$\stackrel{(b)}{\le} \sum_{k=1}^K h(f(k) - \hat{f}_j(k))$ 
%\label{eq:proof_step3}\\
%&
$\stackrel{(c)}{\le} \sum_{k=1}^K \log\left(\pi e \mathbb{E} [ | f(k) - \hat{f}_j(k)|^2 ]\right)$
% \label{eq:proof_step4}\\
%&
$\stackrel{(d)}{\le} K \log\left(\frac{\pi e}{K} \sum_{k=1}^K \mathbb{E} [ | f(k) - \hat{f}_j(k)|^2 ]\right)$
$\stackrel{(e)}{\le} K \log\left(\frac{\pi e}{K} \mathrm{MSE}(\mathfrak{C}, \mathscr{H})\right),$
%
%, \label{eq:proof_step5}
%\end{align}
where 
%\begin{itemize}
    %\item[(a)] 
    (a) follows because the joint entropy is bounded by the sum of marginal entropies;
    %\item[(b)] 
    (b) follows because conditioning cannot increase entropy;
    %\item[(c)] 
    (c) follows because the Gaussian distribution maximizes differential entropy for a fixed second moment; 
    %\item[(d)] 
    (d) follows from the concavity of the logarithm; and (e) follows from the definition of $\mathrm{MSE}(\mathfrak{C}, \mathscr{H})$.
%\end{itemize}
As $\rho \to \infty$, feasibility implies $\mathrm{MSE}(\mathfrak{C}, \mathscr{H}) \le c/\rho$. Substituting this into the RHS of (d) 
%yields:
%\begin{align}
%    \log(\pi e \mathrm{MSE}(\mathfrak{C}, \mathscr{H})) &\le \log(\pi e c) - \log \rho \\
%    &= -\log \rho + o(\log \rho).
%\end{align}
%Multiplying by $K$
completes the proof.
\end{proof}

\section{Proof of Invertibility Bound}\label{app:invert}

\begin{lemma}[Invertibility Bound]\label{lm:invert}
Let $\mathfrak{C}$ be a $(K,N)$ computation code for an $(M_1, M_2; N_1, N_2; \mathscr{H})$ network. If the channel matrices $\m{H}(\nu) \in \mathbb{C}^{(N_1 + N_2) \times (M_1 + M_2)}$ have full column rank for all $\nu \in [1:N]$, then 
\begin{align}
    h(s_j^K|\ve{y}_1^N, \ve{y}_2^N) \le -K \log \rho + o(\log \rho).
\end{align}
\end{lemma}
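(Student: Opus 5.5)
The plan is to show that the pair $(\ve{y}_1^N,\ve{y}_2^N)$ lets us build, up to noise of order $\sigma^2$, the output Rx-$1$ would see if Tx-$2$ transmitted a fixed, known codeword. Rx-$1$'s decoder $\mathcal{D}_1$ applied to that virtual output should then estimate $s_1^K$ (resp.\ $s_2^K$ by symmetry) with MSE $O(1/\rho)$. After that, the same chain (a)--(e) used in the MSE-Equivocation Bound (Appendix~\ref{appendix_mse}) gives $h(s_j^K|\ve{y}_1^N,\ve{y}_2^N)\le K\log\bigl(\tfrac{\pi e}{K}\mathrm{MSE}'\bigr)=-K\log\rho+o(\log\rho)$, because conditioning on the full pair of outputs can only lower entropy relative to conditioning on any function of it.

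\textbf{Step 1 (channel inversion).} Since each $\m{H}(\nu)$ has full column rank, apply the left pseudo-inverse $\m{H}(\nu)^{+}=(\m{H}(\nu)^\dagger\m{H}(\nu))^{-1}\m{H}(\nu)^\dagger$ to the stacked output $[\ve{y}_1(\nu);\ve{y}_2(\nu)]$. This gives $\hat{\ve{x}}_i(\nu)=\ve{x}_i(\nu)+\ve{w}_i(\nu)$ for $i\in\{1,2\}$, where $\ve{w}=\m{H}^{+}\ve{z}$ is Gaussian with covariance $\sigma^2(\m{H}^\dagger\m{H})^{-1}$. This covariance depends only on the channel, which is bounded away from zero and infinity, so it is $O(\sigma^2)$ uniformly in $\nu$.

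\textbf{Step 2 (genie-simulated output).} Fix the deterministic codeword $\bar{\ve{x}}_2^N=\mathcal{E}_2(0^K)$ and form
\[
\tilde{\ve{y}}_1(\nu)=\ve{y}_1(\nu)-\m{H}_{12}(\nu)\hat{\ve{x}}_2(\nu)+\m{H}_{12}(\nu)\bar{\ve{x}}_2(\nu),
\]
which simplifies to $\tilde{\ve{y}}_1(\nu)=\m{H}_{11}\ve{x}_1+\m{H}_{12}\bar{\ve{x}}_2+\tilde{\ve{z}}_1$, where $\tilde{\ve{z}}_1=\ve{z}_1-\m{H}_{12}\ve{w}_2$. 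This is precisely Rx-$1$'s output in the event $s_2^K=0$, except that the noise $\tilde{\ve{z}}_1$ is Gaussian with covariance $\preceq\kappa\sigma^2\m{I}$ rather than exactly $\sigma^2\m{I}$; here $\kappa$ depends only on $M_o$ and the antenna numbers. Applying $\mathcal{D}_1$ then yields an estimate of $s_1^K+0$. The symmetric construction with the roles of the transmitters exchanged handles $j=2$.

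\textbf{Step 3 (noise mismatch --- the main obstacle).} The code $\mathfrak{C}(\rho)$ is only guaranteed MSE $O(1/\rho)$ under noise exactly $\mathcal{CN}(0,\sigma^2\m{I})$ and with $s_2$ Gaussian. Two gaps must be closed.

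\emph{Correlated noise.} The virtual noise is correlated with the original $\ve{z}_1$ and has the wrong covariance. I would first try to choose a different linear estimate of $\ve{x}_2$ from $(\ve{y}_1,\ve{y}_2)$ to make $\tilde{\ve{z}}_1$ exactly equal in law to $\ve{z}_1$; failing that, add independent Gaussian dither so that $\tilde{\ve{z}}_1\sim\mathcal{CN}(0,\kappa\sigma^2\m{I})$ exactly. One would then invoke feasibility at SNR $\rho/\kappa$, using that $\log(\rho/\kappa)=\log\rho+o(\log\rho)$.

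\emph{Pinning $s_2^K=0$.} For a single code I expect it is cleaner to avoid fixing $s_2^K=0$. Instead, simulate Rx-$1$'s output for an independent fresh copy $s_2'^K$ generated by the genie: feed $\mathcal{E}_2(s_2'^K)$ in place of $\hat{\ve{x}}_2$. This reproduces the correct joint input distribution, so $\mathcal{D}_1$ outputs $s_1^K+s_2'^K$ with MSE $O(1/\rho)$. Subtracting the known $s_2'^K$ then gives $s_1^K$ with MSE $O(1/\rho)$, and averaging over $s_2'^K$ as independent side randomness only loosens the conditional entropy by $o(\log\rho)$.

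Verifying that the resulting noise law matches the one under which feasibility is assumed, or that the mismatch costs only $o(\log\rho)$, is where I expect the real work to lie.
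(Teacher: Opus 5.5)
Your Steps 1--2 and the fresh-copy device are sound. Indeed $h(s_1^K|\ve{y}_1^N,\ve{y}_2^N)=h(s_1^K|\ve{y}_1^N,\ve{y}_2^N,s_2^{\prime K})$ holds with equality, since $s_2^{\prime K}$ is independent of everything else. The problem is that the argument stops at the one step that carries the content, and the fix you sketch there is not valid.

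Feasibility only guarantees that the particular code $\mathfrak{C}_{K,N}(\rho)$ used at SNR $\rho$ has MSE $O(1/\rho)$ when its decoder sees noise distributed exactly as $\mathcal{CN}(0,\sigma^2\m{I})$. Feeding $\mathcal{D}_1$ a virtual output with noise $\mathcal{CN}(0,\kappa\sigma^2\m{I})$, $\kappa>1$, runs $\mathfrak{C}_{K,N}(\rho)$ away from its design point. ``Invoking feasibility at SNR $\rho/\kappa$'' only gives you a different code $\mathfrak{C}_{K,N}(\rho/\kappa)$. It says nothing about $\mathfrak{C}_{K,N}(\rho)$, which may have threshold behaviour and a much larger MSE at lower SNR.

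Your other option, choosing the estimate of $\ve{x}_2$ so that the virtual noise has covariance exactly $\sigma^2\m{I}$, is not available in general. When $\m{H}(\nu)$ is square (e.g.\ SISO), the linear map applied to $[\ve{y}_1;\ve{y}_2]$ is forced to be $[\m{H}_{11}\;\,\m{0}]\,\m{H}^{-1}$. In the SISO case its squared norm is $|\m{H}_{11}|^2(|\m{H}_{12}|^2+|\m{H}_{22}|^2)/|\m{H}_{11}\m{H}_{22}-\m{H}_{12}\m{H}_{21}|^2$, which can exceed $1$, so no dither can bring the noise down to $\sigma^2$.

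The missing idea is to compare entropies rather than decoder MSEs, which is what the paper does. After inverting the channel, it keeps only $(\ve{x}_1+\tilde{\ve{z}}_1)^N$ in the conditioning. It then adds the true $s_2^K$ and a fresh $\ve{z}_1^{\prime N}\sim\mathcal{CN}(0,\sigma^2\m{I})$, with equality by independence. This yields $h(s_1^K|\ve{y}_1^N,\ve{y}_2^N)\le h(f^K|Y')$ for $Y'=(\m{H}_{11}\ve{x}_1+\m{H}_{12}\ve{x}_2+\ve{z}_1'+\m{H}_{11}\tilde{\ve{z}}_1)^N$.

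Now $Y'=Y+W$, where $(f^K,Y)$ has exactly the joint law of $(f^K,\ve{y}_1^N)$. The term $W=(\m{H}_{11}\tilde{\ve{z}}_1)^N$ is independent Gaussian noise with covariance $\sigma^2\tilde{\m{K}}$, where $\tilde{\m{K}}$ does not depend on $\rho$. Hence $h(Y')\ge h(Y)$, and $h(Y'|f^K)-h(Y|f^K)\le I(W;\ve{z}_1^{\prime N}+W)=\log\det(\m{I}+\tilde{\m{K}})$ (the genie-chains bound the paper cites). Therefore $h(f^K|Y')\le h(f^K|\ve{y}_1^N)+O(1)$. The MSE-Equivocation Bound is then applied to the genuine output $\ve{y}_1^N$ of the genuine code at SNR $\rho$.

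In this route the noise mismatch costs only a constant in entropy, and nothing about $\mathcal{D}_1$ off its design point is ever needed. Your construction can be repaired the same way. For example, use $\m{H}_{11}\hat{\ve{x}}_1+\m{H}_{12}\mathcal{E}_2(s_2^{\prime K})+\ve{z}_1'$ so that the excess noise is independent and additive, and then bound entropies instead of running the decoder. As written, however, Step 3 leaves the proof incomplete.
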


\begin{proof}
Since $\m{H}(\nu)$ has full column rank, there exists a left-inverse matrix $\m{G}(\nu) \in \mathbb{C}^{(M_1 + M_2) \times (N_1 + N_2)}$ such that $\m{G}(\nu)\m{H}(\nu) = \m{I}$. Thus, applying this to the received signals:
\begin{align}
    \m{G}(\nu) 
    \begin{bmatrix}
        \ve{y}_1(\nu) \\ \ve{y}_2(\nu)
    \end{bmatrix}    
    =
    \begin{bmatrix}
        \ve{x}_1(\nu) \\
        \ve{x}_2(\nu)
    \end{bmatrix}
    +
    \m{G}(\nu)
    \begin{bmatrix}
        \ve{z}_1(\nu) \\
        \ve{z}_2(\nu)
    \end{bmatrix}.
\end{align} 
In the above equation, the noise term remains Gaussian. Let us define the effective noise vectors:
\begin{align}
    \begin{bmatrix}
       \tilde{\ve{z}}_1(\nu) \\ \tilde{\ve{z}}_2(\nu)
    \end{bmatrix}
    &\triangleq
    \m{G}(\nu)
    \begin{bmatrix}
       \ve{z}_1(\nu) \\ \ve{z}_2(\nu)
    \end{bmatrix}, \\
    \tilde{\ve{z}}_1(\nu) &\sim \mathcal{CN}(0, \sigma^2 \m{K}_1(\nu)), \\
    \tilde{\ve{z}}_2(\nu) &\sim \mathcal{CN}(0, \sigma^2 \m{K}_2(\nu)).
\end{align}
Since $\m{G}(\nu)$ has full row rank, the covariance matrices $\m{K}_1(\nu)$ and $\m{K}_2(\nu)$ are positive definite. Focusing on $s_1^K$ as an example, we bound the equivocation:
\begin{align}
    h(s_1^K&|\ve{y}_1^N,\ve{y}_2^N) \notag \\
    &= h(s_1^K|\ve{y}_1^N,\ve{y}_2^N, (\ve{x}_1 + \tilde{\ve{z}}_1)^N, (\ve{x}_2 + \tilde{\ve{z}}_2)^N) \label{eq:inv_1} \\
    &\stackrel{(a)}{\le} h(s_1^K|(\ve{x}_1 + \tilde{\ve{z}}_1)^N) \label{eq:inv_2} \\
    &\stackrel{(b)}{=} h(s_1^K|(\ve{x}_1 + \tilde{\ve{z}}_1)^N, s_2^K, \ve{z}_1^{\prime N}) \label{eq:inv_3} \\
    &= h(f^K|(\m{H}_{11} \ve{x}_1 + \m{H}_{12} \ve{x}_2 +  \m{H}_{11} \tilde{\ve{z}}_1 +\ve{z}^\prime_1)^N, \notag \\
    &\quad (\ve{x}_1 + \tilde{\ve{z}}_1)^N, s_2^K, \ve{z}_1^{\prime N}) \label{eq:inv_4} \\
    &\stackrel{(c)}{\le} h(f^K|(\m{H}_{11} \ve{x}_1 + \m{H}_{12} \ve{x}_2 +  \m{H}_{11} \tilde{\ve{z}}_1 +\ve{z}^\prime_1)^N). \label{eq:inv_5}
\end{align}
where
%\begin{itemize}
%    \item[(a)] 
    (a) follows because conditioning cannot increase entropy;
%    \item[(b)] 
(b) follows because $s_2^K$ and the auxiliary random sequence $\ve{z}_1^{\prime N} \sim \mathcal{CN}(0, \sigma^2 \m{I})$ are independent of $s_1^K$ and $(\ve{x}_1 + \tilde{\ve{z}}_1)^N$; and
%    \item[(c)] 
(c) follows because conditioning reduces entropy.
%\end{itemize}

Next, we bound the term in \eqref{eq:inv_5} using the genie-aided bound derived in \cite{wangGenieChainsExploring2016}. From equation (230) in \cite{wangGenieChainsExploring2016}, we have:
\begin{align}
    h\big((\m{H}_{11} \ve{x}_1 &+ \m{H}_{12} \ve{x}_2 + \ve{z}^\prime_1)^N\big)  \label{eq:wang_lower}\\
    &\le h\big((\m{H}_{11} \ve{x}_1 + \m{H}_{12} \ve{x}_2 + \m{H}_{11} \tilde{\ve{z}}_1 + \ve{z}^\prime_1)^N\big) \label{eq:entropy_lower_bound} \\
    &\stackrel{(d)}{\le} h\big((\m{H}_{11} \ve{x}_1 + \m{H}_{12} \ve{x}_2 + \ve{z}^\prime_1)^N\big) \notag \\
    &\quad + \log \det(\tilde{\m{K}} + \m{I}). \label{eq:wang_upper}
\end{align}
where (d) applies the bound from \cite{wangGenieChainsExploring2016}, and the covariance matrix of the additional noise is defined as:
\begin{align} \label{eq:cov_matrix_def}
    \tilde{\m{K}} &= \E{\m{H}_{11}^{[N]} \tilde{\ve{z}}_1^{(N)} \tilde{\ve{z}}_1^{(N) \dagger} \m{H}_{11}^{[N] \dagger}}  \E{\ve{z}_1^{\prime (N)} \ve{z}_1^{\prime (N)\dagger}}^{-1}  \\
    &= \m{H}_{11}^{[N]} \m{K}_1^{[N]} \m{H}_{11}^{[N]\dagger}.
\end{align}
Because the channel uses are independent, the determinant expands as:
\begin{align}
    \log \det(\tilde{\m{K}} + \m{I}) = \sum_{\nu =1}^{N} \log \det(\m{H}_{11}(\nu) \m{K}_1(\nu) \m{H}_{11}(\nu)^\dagger + \m{I}).
\end{align}
As $\rho \to \infty$, the noise variance components inside $\m{K}_1(\nu)$ scale such that $\log \det(\m{H}_{11}(\nu) \m{K}_1(\nu) \m{H}_{11}(\nu)^\dagger + \m{I}) = o(\log \rho)$ for all $\nu \in [1:N]$. Thus, the lower bound \eqref{eq:wang_lower} and the upper bound \eqref{eq:wang_upper} gives:
\begin{align} \label{eq:genie_final}
    h\big(&(\m{H}_{11} \ve{x}_1 + \m{H}_{12} \ve{x}_2 + \m{H}_{11} \tilde{\ve{z}}_1 +\ve{z}^\prime_1)^N\big) \notag \\
    &= h\big((\m{H}_{11} \ve{x}_1 + \m{H}_{12} \ve{x}_2 + \ve{z}^\prime_1)^N\big) + o(\log \rho).
\end{align}
Substituting \eqref{eq:genie_final} back into our main chain \eqref{eq:inv_5}, and noting that $(\m{H}_{11} \ve{x}_1 + \m{H}_{12} \ve{x}_2 +  \ve{z}^\prime_1)^N$ has the exact same distribution as the original received signal $\ve{y}_1^N$, we obtain:
\begin{align}
    h(s_1^K|\ve{y}_1^N,\ve{y}_2^N) &\le h(f^K|\ve{y}_1^N) + o(\log \rho) \label{eq:inv_6} \\
    &\stackrel{(e)}{\le} - K \log \rho + o(\log \rho), \label{eq:final_invert_bound}
\end{align}
where (e) follows directly from the MSE-Equivocation Bound.
\end{proof}

\section{Power Bound}\label{appendix_4}

\begin{lemma}[Power Bound]\label{lm:powerbound}
The differential entropy of the signal received at Rx-$j$, as $\rho\rightarrow\infty$, is bounded by
\begin{align}
  h(\ve{y}_j^N) \le NN_j \log \left((M_1+M_2)M_o^2  + \frac{1}{\rho} \right) + o(\log \rho).
\end{align}
\end{lemma}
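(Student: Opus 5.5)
The plan is the standard Gaussian maximum-entropy bound applied per channel use and per receive antenna, after first bounding the received power at each antenna.

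\textbf{Chain rule.} I would write $\ve{y}_j^N=(\ve{y}_j(1),\ldots,\ve{y}_j(N))$ and apply the chain rule, dropping conditioning:
\begin{align}
h(\ve{y}_j^N)\le\sum_{\nu=1}^N\sum_{r=1}^{N_j} h\bigl([\ve{y}_j(\nu)]_r\bigr).
\end{align}
This reduces the claim to bounding the entropy of each scalar complex output.

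\textbf{Per-antenna power.} Each scalar output is
\begin{align}
[\ve{y}_j(\nu)]_r=\sum_{i}\sum_{m=1}^{M_i}[\m{H}_{ji}(\nu)]_{r,m}[\ve{x}_i(\nu)]_m+[\ve{z}_j(\nu)]_r.
\end{align}
The noise is independent of the signals, so the cross term has zero mean. I would then use Cauchy–Schwarz over the $M_1+M_2$ summands together with $|[\m{H}_{ji}(\nu)]_{r,m}|\le M_o$. This gives
\begin{align}
\mathbb{E}\bigl|[\ve{y}_j(\nu)]_r\bigr|^2\le (M_1+M_2)M_o^2\bigl(\mathbb{E}\|\ve{x}_1(\nu)\|^2+\mathbb{E}\|\ve{x}_2(\nu)\|^2\bigr)+\sigma^2 .
\end{align}

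\textbf{Gaussian bound and averaging.} The circularly symmetric Gaussian maximizes entropy for a given second moment, so $h([\ve{y}_j(\nu)]_r)\le\log(\pi e P_{\nu})$, where $P_\nu$ is the bound above. Summing over $\nu$ and using concavity of $\log$ (Jensen), the per-use powers can be replaced by their average. By \eqref{eq:P}, that average transmit power is at most $1$ per transmitter. The result is
\begin{align}
h(\ve{y}_j^N)\le NN_j\log\bigl(c\,(M_1+M_2)M_o^2+1/\rho\bigr)+NN_j\log(\pi e),
\end{align}
where $c\le 2$ comes from the two transmitters' power budgets.

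\textbf{Absorbing constants.} The factor $\log(\pi e)$ is a constant and so is $o(\log\rho)$. The factor $c$ in front of $(M_1+M_2)M_o^2$ must also be pulled out. I would write $\log(c A+1/\rho)\le \log c+\log(A+1/\rho)$ for $c\ge1$, so the $\log c$ term is likewise a constant and absorbed into $o(\log\rho)$. This yields exactly the stated form.

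\textbf{Main obstacle.} The only real care needed is this constant bookkeeping. The Cauchy–Schwarz step and the sum of the two transmitters' powers produce a constant multiple of $(M_1+M_2)M_o^2$ rather than exactly that quantity. This does not matter, because any $\rho$-independent constant inside or outside the log is absorbed into $o(\log\rho)$. The bound's role in the converse is only that $h(\ve{y}_j^N)=O(1)$, i.e. $o(\log\rho)$, plus the displayed term.

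I would also note that $M_o$ bounds each entry of $\m{H}_{ji}(\nu)$. This is how the hypothesis $|\m{H}_{ij}(\nu)|\le M_o$ is interpreted in the MIMO case, and it is what makes the per-antenna power bound hold.
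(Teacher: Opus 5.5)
Your proof is correct and follows essentially the same route as the paper. Both bound the received power using $|[\m{H}_{ji}(\nu)]_{r,m}|\le M_o$ and the power constraint, apply the Gaussian maximum-entropy bound, and absorb the $\rho$-independent constants into $o(\log\rho)$; your per-antenna subadditivity plus Jensen is equivalent to the paper's bound $\log\det\bigl(\pi e\,\tfrac{\mathbb{E}\|\ve{y}_j^{(N)}\|^2}{NN_j}\m{I}_{NN_j}\bigr)$. The one real difference is that you bound the signal term by Cauchy--Schwarz on the concatenated row rather than by the paper's WLOG zero-mean/independence argument that cancels the $\ve{x}_1$--$\ve{x}_2$ cross terms, which costs the factor $c\le 2$ that you correctly absorb as $\log 2$.
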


\begin{proof}
Without loss of generality, we assume the computation code generates zero-mean signals, as any non-zero mean would consume transmit power without carrying information. The total average received power across $N$ channel uses is evaluated as:
\begin{align}
    &\mathbb{E} \{ \|\ve{y}_j^{(N)} \|^2 \}\notag\\
    &= \mathbb{E} \bigg\{ \sum_{n=1}^N \| \m{H}_{j1} \ve{x}_1(n) + \m{H}_{j2} \ve{x}_2(n) + \ve{z}_j(n) \|^2 \bigg\} \label{eq:pwr_1} \\
    &\stackrel{(a)}{=} \sum_{n=1}^N \Big( \mathbb{E} \|\m{H}_{j1} \ve{x}_1(n)\|^2 + \mathbb{E} \|\m{H}_{j2} \ve{x}_2(n)\|^2 + \mathbb{E} \|\ve{z}_j(n)\|^2 \Big) \label{eq:pwr_2} \\ 
    &\stackrel{(b)}{\le} \sum_{n=1}^N \Big( \|\m{H}_{j1}\|_F^2 \mathbb{E} \|\ve{x}_1(n)\|^2  + \|\m{H}_{j2}\|_F^2 \mathbb{E} \|\ve{x}_2(n)\|^2 \Big) + N N_j \sigma^2 \label{eq:pwr_3} \\
    &\stackrel{(c)}{\le} N N_j (M_1 + M_2)M_o^2 + NN_j \sigma^2. \label{eq:pwr_4}
\end{align}
where (a) follows because the signals $\ve{x}_1$ and $\ve{x}_2$ are generated from independent zero-mean data messages, and are independent of the zero-mean noise $\ve{z}_j$, causing all cross-terms to evaluate to exactly zero; (b) follows from the sub-multiplicative property of the Frobenius norm ($\|\m{H}\ve{x}\| \le \|\m{H}\|_F \|\ve{x}\|$); and (c) follows from the normalized transmit power constraint \eqref{eq:P} and the fact that the squared magnitude of each element in $\m{H}_{ji}$ is bounded by $M_o^2$, yielding $\|\m{H}_{ji}\|_F^2 \le N_j M_i M_o^2$.
%\begin{itemize}
%    \item[(a)] follows because the signals $\ve{x}_1$ and $\ve{x}_2$ are generated from independent zero-mean data messages, and are independent of the zero-mean noise $\ve{z}_j$, causing all cross-terms to evaluate to exactly zero;
%    \item[(b)] follows from the sub-multiplicative property of the Frobenius norm ($\|\m{H}\ve{x}\| \le \|\m{H}\|_F \|\ve{x}\|$);
%    \item[(c)] follows from the normalized transmit power constraint \eqref{eq:P} and the fact that the squared magnitude of each element in $\m{H}_{ji}$ is bounded by $M_o^2$, yielding $\|\m{H}_{ji}\|_F^2 \le N_j M_i M_o^2$.
%\end{itemize}

Since a circularly symmetric complex Gaussian distribution maximizes differential entropy for a given average power constraint, we bound the entropy of the output sequence:
\begin{align}
    h(\ve{y}_j^N) &\stackrel{(d)}{\le} \log \det\left( \pi e \frac{\E{\lVert \ve{y}_j^{(N)} \rVert^2}}{NN_j} \m{I}_{NN_j} \right) \label{eq:pwr_5} \\
    &\stackrel{(e)}{\le} \log \det\left(\pi e ((M_1 + M_2) M_o^2 +  \sigma^2)\m{I}_{NN_j}\right) \label{eq:pwr_6} \\
    &= NN_j \log \left(\pi e ((M_1 + M_2) M_o^2 +  \sigma^2)\right) \label{eq:pwr_7} \\
    &= NN_j \log \left((M_1 + M_2)M_o^2+ \frac{1}{\rho}\right) + NN_j \log(\pi e). \label{eq:pwr_8}
\end{align}
In the above, (d) applies the maximum entropy theorem for a complex random vector of dimension $NN_j$, and (e) substitutes the power upper bound derived in \eqref{eq:pwr_4}. Recognizing that the constant term $NN_j \log(\pi e)$ is  $o(\log \rho)$ as $\rho \to \infty$,  completes the proof.
\end{proof}

\bibliographystyle{IEEEtran}
\bibliography{ref}

\end{document}